\setlist{noitemsep,topsep=0pt,parsep=0pt} 
\tikzset{every fit/.append style=text badly centered}
\newcommand{\Rmnum}[1]{\expandafter\@slowromancap\romannumeral #1@}
\newcommand{\Holant}{\operatorname{Holant}}
\newcommand{\holant}[2]{\ensuremath{\Holant\left(#1\mid #2\right)}}
\newcommand{\numP}{{\rm \#P}}
\newenvironment{remark}{\medskip{\bfseries \noindent Remark:}}{\par\medskip}{\par\medskip}
\tikzstyle{internal} = [draw, fill, shape=circle]
\tikzstyle{external} = [shape=circle]
\tikzstyle{square}   = [draw, fill, rectangle]
\tikzstyle{triangle} = [draw, fill, regular polygon, regular polygon sides=3, inner sep=3pt]
\tikzstyle{pentagon} = [draw, fill, regular polygon, regular polygon sides=5, inner sep=2pt, minimum size=14pt]
\begin{document}

\title{{\bf Dichotomy Result on 3-Regular Bipartite Non-negative Functions}}

\vspace{0.3in}

	\title{\bf Dichotomy Result on 3-Regular Bipartite Non-negative Functions}
		\vspace{0.3in}
		\author{Austen Z. Fan\thanks{Department of Computer Sciences, University of Wisconsin-Madison.}\\ {\tt zfan64@wisc.edu}
\and  Jin-Yi Cai\thanks{Department of Computer Sciences, University of Wisconsin-Madison. Supported by NSF CCF-1714275.
			} \\ {\tt jyc@cs.wisc.edu}
		}
	
\date{}
\maketitle

\begin{abstract}
We prove a complexity dichotomy theorem for a class of Holant
problems on 3-regular bipartite graphs. 
Given an arbitrary  nonnegative weighted
symmetric constraint function $f = [x_0, x_1, x_2, x_3]$,
we prove that the bipartite Holant problem
$\holant{f}{(=_3)}$
is \emph{either} computable in polynomial time \emph{or}
\#P-hard. The dichotomy criterion on $f$ is explicit.
 \end{abstract}

\section{Introduction}
Holant problems are also called edge-coloring models.
They can express a broad class of counting problems,
such as counting matchings ({\sc \#Matchings}),
perfect matchings ({\sc \#PM}), edge-colorings, cycle coverings,
and a host of counting orientation problems such as counting
Eulerian orientations.

Given an input graph $G = (V, E)$, we identify each 
edge $e \in E$ as a
variable over some finite domain $D$, and identify
each vertex $v$ as a constraint function $f_v$.
Then the \emph{partition function} is the following sum of
product $\sum_{\sigma: E \rightarrow D}
\prod_{v \in V} f_v(\sigma |_{E(v)})$, where
$E(v)$ denotes the edges incident to $v$
and $f_v(\sigma |_{E(v)})$ is the evaluation
of $f_v$ on the restriction of $\sigma$ on $E(v)$.
For example, {\sc \#Matchings} and {\sc \#PM}  are
counting  problems
specified by the constraint function
{\sc At-Most-One}, respectively {\sc Exact-One},
which outputs value 1 if
the input bits have \emph{at most} one 1, respectively \emph{exactly} one 1,
and outputs 0 otherwise. Thus,
every term $\prod_{v \in V} f_v(\sigma |_{E(v)})$ evaluates
to 1 or 0, and is 1 iff the assignment $\sigma$ is a
mathching,  respectively a perfect mathching.
The framework
of Holant problems  is intimately related to Valiant's
holographic algorithms~\cite{valiant2008holographic}. 
Holant problems can encompass
all counting constraint satisfaction problems (\#CSP),
in particular all vertex-coloring models from statistical
physics.
For \#CSP, Bulatov~\cite{Bulatov} proved a 
sweeping complexity dichotomy. Dyer and Richerby~\cite{DyerRicherby} gave another proof of this dichotomy and also showed that the dichotomy is decidable.  Cai, Chen, and Lu in \cite{CCL} 
extended this to nonnegative weighted case. This was further generalized to complex weighted \#CSP~\cite{CaiChen-JACM}.
However, no full dichotomy has been proved for Holant problems.

Every \#CSP problem can be easily expressed
as a Holant problem. On the other hand,
Freedman, Lov\'asz and Schrijver~\cite{Freedman-Lovasz-Schrijver-2007} proved that the prototypical Holant problem
\#PM cannot be expressed as a vertex-coloring model
using any real constraint function, and this was extended
to complex constraint functions~\cite{cai-govorov-express}. Thus Holant problems
are strictly more expressive.

In this paper we consider the Boolean domain
$D = \{0,1\}$. Significant knowledge has been
gained about the complexity of Holant problems~\cite{DBLP:conf/icalp/Backens18,DBLP:journals/talg/BackensG20,cai2017complexity,cai2016complete,lin2018complexity,shao2020dichotomy}.
However, there has been very limited progress
on bipartite Holant problems where the input
graph $G = (U,V,E)$ is bipartite, and
every constraint function on $U$ and $V$ comes from
two separately specified sets of constraint functions.
This is not an oversight; the reason is a serious
technical obstacle. When the graph is bipartite and, say,
$r$-regular, 
there is a curious number theoretic limitation
as to what types of subgraph fragments, called gadgets,
one can possibly construct.
It turns out that every constructible  gadget must
have a rigid arity restriction; e.g., if the gadget
represents a constraint function that can be used for
a vertices in $U$ or in $V$, the arity (the number of input variables)
of the function must be congruent to 0 modulo $r$.

We initiate in this paper 
the study of Holant problems on 
 bipartite graphs. To be specific,
we classify Holant problems on 
3-regular bipartite graphs where vertices of  one side 
 are labeled with
a nonnegative weighted
symmetric constraint function $f = [x_0, x_1, x_2, x_3]$,
which takes value $x_0, x_1, x_2, x_3$ respectively when
the input has Hamming weight 0, 1, 2, 3;
the  vertices of the other side are labeled by
the ternary equality function $(=_3)$.
Such graphs can be viewed as incidence graphs of hypergraphs.
Thus one can interpret problems of this type
as on 3-uniform (every hyperedge has size 3)
3-regular (every vertex appears in 3 hyperedges)
hypergraphs. In this view, e.g., counting
perfect matchings on 3-uniform hypergraphs 
(the number of subsets of hyperedges that cover
every vertex exactly once) corresponds to the
{\sc Exact-One} function $[0,1,0,0]$.
Alternatively one can think of them as
set systems where every set has 3 elements
and every element is in 3 subsets, and we count the number of 
exact-3-covers (\#X3C). 
Denote the problem of computing the partition function
in this case as 
$\holant{f}{(=_3)}$. We prove that
for all $f$, this problem is  \emph{either}
computable in polynomial time \emph{or}
\#P-hard, depending on an explicit
dichotomy criterion on $x_0, x_1, x_2, x_3$.
Suppose
$(X, {\cal S})$ is a 3-uniform set system where
every $x \in X$ appears in 3 sets in ${\cal S}$.
In the 0-1 case,
this dichotomy completely classifies the complexity
of  counting the number of 
ways to choose  ${\cal S'} 
\subseteq {\cal S}$ while satisfying some local constraint
specified by $f$.  In the more general nonnegative
weighted case, this is to  compute a weighted sum
of products. 

As mentioned earlier the main technical obstacle
to this regular bipartite Holant dichotomy is
the number theoretic arity restriction.
We overcome this obstacle by considering \emph{straddled}
constraint functions, i.e., those functions which have
some input variables that must be connected to one side
of the bipartite graph while some other 
 variables  must be connected to the other side.
Then we introduce a lemma that let's 
interpolate a \emph{degenerate} constraint function
by iterating the straddled function construction.
Using a Vandermonde system we can succeed in
this interpolation. Typically in proving a Holant dichotomy,
getting a degenerate constraint function is useless
and signifies failure. But here we turn the table,
and transform this ``failure'' to a ``success''
by ``peel'' off a constraint function which is a tensor factor,
whereby to break the  number theoretic arity restriction.

This paper is a mere starting point for 
understanding bipartite Holant problems. Almost
every generalization is an open problem at this point,
including more than one constraint function on either side, 
other regularity parameter $r$, real or complex valued 
constraint functions which allow cancellations, etc. 
The bigger picture
is to gain a systematic understanding of all such counting
problems in a classification program, a theme seems
second to none in its centrality to counting complexity theory,
short of proving PF $\ne$ \#P.

\section{Preliminaries}
In this paper we consider the following subclass of $\operatorname{Holant}$ problems. An input 3-regular bipartite graph $G = (U,V,E)$ is given, where each vertex on $V$ is assigned the \textsc{Equality} of arities 3 $\left(=_{3} \right)$ and each vertex on $U$ is assigned a ternary symmetric constraint function
(also called a signature) $f$ with nonnegative values. The problem is to compute the quantity
$$
\operatorname{Holant}\left(G\right) = \sum_{\sigma: E \rightarrow \{ 0,1\}} \prod_{u \in U} f\left(\sigma |_{E(u)}\right) \prod_{v \in V} \left( =_{3} \right) \left(\sigma |_{E(v)}\right)
$$

Equivalently, this  can be stated as a  weighted counting constraint satisfaction problem on Boolean variables
defined on 3-regular bipartite graphs:
The input is a 3-regular  bipartite $G = (U,V,E)$,
where  every $v \in V$ is a Boolean variable and
 every $u \in U$ represents the nonnegative valued constraint function $f$.
 An edge $(u,v) \in E$ indicates that $v$ appears in
 the constraint at $u$. Being 3-regular means that
 every constraint has 3 variables and every variable
 appears in 3 constraints.

We adopt the notation $f = [x_0,x_1,x_2,x_3]$ to represent the ternary symmetric signature $f$ where $f\left(0,0,0\right) = x_0$, $f\left(0,0,1\right)=f\left(0,1,0\right) = f\left(1,0,0\right) = x_1$, $f\left(0,1,1\right)=f\left(1,0,1\right) = f\left(1,1,0\right) = x_2$ and $f\left( 1,1,1\right) = x_3$. The \textsc{Equality} of arities 3 is $\left(=_{3} \right) = [1,0,0,1]$. For clarity, we shall call vertices in $V$ are on the right hand side (RHS) and vertices in $U$ are on the left hand side (LHS).  We denote this problem
$\holant{f}{(=_3)}$.



A gadget, such as those illustrated in Figure~\ref{gadgets},
is  a bipartite graph $G = (U, V, E_{\rm in}, E_{\rm out})$ with
internal edges $E_{\rm in}$ and dangling edges $E_{\rm out}$.
There can be $m$ dangling edges internally incident
to vertices from $U$ and $n$ dangling edges internally incident
to vertices from $V$. These $m+n$ dangling edges 
correspond to Boolean variables $x_1, \ldots, x_m, y_1, \ldots,  y_n$
and the gadget defines
a signature
\[f(x_1, \ldots, x_m, y_1, \ldots,  y_n)
=
\sum_{\sigma: E_{\rm in} \rightarrow \{ 0,1\}} \prod_{u \in U} f\left(\widehat{\sigma} |_{E(u)}\right) \prod_{v \in V} \left( =_{3} \right) \left(\widehat{\sigma}  |_{E(v)}\right),
\]
where $\widehat{\sigma} $ denotes the extension
of $\sigma$ by the assignment on the  dangling edges.

To preserve the bipartite structure, we must be careful
in any gadget construction how each  external wire is
to be connected, i.e., as an input variable
whether it is on the LHS (like those of $f$ which can be
used to  connect to $(=_3)$ on the RHS), or
it is on the RHS (like those of $(=_3)$ which can be
used to  connect to $f$ on the LHS).
See illustrations in Figure~\ref{gadgets}.

If a gadget construction produces a constraint
function $g$ such that all of its variables are on the LHS.
We claim that its arity must be a multiple of 3.
This follows from the following more general statement
that if a constraint function has $n$ input variables
on the LHS and $m$ input variables
on the RHS, then $n \equiv m \bmod 3$.
This can be easily proved by induction on the number of
occurrences of $f$ and $(=_3)$ in a  gadget construction
$\Gamma$: If either $f$ or $(=_3)$
do not occur, then the function is a tensor
product of $(=_3)$ or $(=_3)$, thus
clearly of arity $n \equiv 0 \bmod 3$ or $m \equiv 0 \bmod 3$. 
Suppose $f$ or $(=_3)$ both occur and let
$x$ be an external dangling edge. It is  internally
connected to  a vertex $v$, which is labeled
either  $f$ (if $v \in U$) or $(=_3)$ (if $v \in V$).
Let $v$  have exactly $k \in \{1, 2, 3\}$
incident edges that are dangling edges.
Now we remove $v$ and get a  gadget $\Gamma'$ with fewer
occurrences of $f$ and $(=_3)$. The induction hypothesis
and a simple accounting of the
arity for variables on the LHS and the RHS separately
complete the induction.

One  idea that is instrumental in this paper is to
use gadgets that produce \emph{straddled} signatures.
For example the gadget $G_1$ in Figure~\ref{gadgets}(a)
has one variable on the LHS and one variable on the RHS.
Such gadgets can be iterated while respecting the bipartite
structure.
%
The \emph{signature matrix} of such straddled gadgets will adopt the notation that row indices denote the input from LHS and column indices denote the input from RHS. For example, the signature matrix for $G_1$ in Figure~\ref{gadgets}(a) where we place $[x_0,x_1,x_2,x_3]$ on the square and $\left(=_3\right)$ on the circle will be $\left( \begin{array}{cc} x_0 &x_2 \\ x_1 &x_3\end{array}\right)$.  A binary signature is
\emph{degenerate} if it has determinant 0.
We will use such constructions to interpolate a {degenerate} straddled signature so that we can ``split" it to get unary signatures.
To justify that we can indeed ``split" a degenerate straddled signature, whenever we connect a unary signature, we have to ``use up" another unary signature, for the pair was created by that degenerate straddled signature. Intuitively, we can ``use all other edges up" by connecting them to form known positive global factors of the $\Holant$ value. This intuition is encoded into the following lemma.
Effectively, we can  split a degenerate binary straddled signature into unary signatures to be freely used.

\begin{lemma}\label{lm:dege-unary-split}
Let $f$ and $g$ be two nonnegative valued
signatures.
If a degenerate nonnegative binary straddled signature $\left( \begin{array}{cc} 1 &x \\ y &xy\end{array}\right)$ can be interpolated in the problem $\holant{f}{g}$,
then 
\begin{equation}\label{getting-unary}
\holant{f}{\{g, [1,x]\}} \leq_{T} \holant{f}{g}.
\end{equation}
A similar statement holds for adding the
unary $[1,y]$ on the LHS.
\end{lemma}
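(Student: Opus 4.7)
The plan is to establish $\holant{f}{\{g, [1,x]\}} \leq_T \holant{f}{\{g, S\}}$ where $S = \left( \begin{array}{cc} 1 & x \\ y & xy \end{array} \right)$ is the given straddled signature; combined with the hypothesis that $S$ can be interpolated in $\holant{f}{g}$, this chains to the desired reduction. The core observation I will exploit is that $S$ is rank one with factorization $S = \tbcolvec{1}{y}(1,x)$, so in any $\Holant$ sum the contribution of an instance of $S$ decouples: its LHS port behaves as an independent unary $[1,y]$ and its RHS port as an independent unary $[1,x]$. This is exactly the ``splitting'' intuition described immediately before the lemma.

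Given an input instance $\Omega$ of $\holant{f}{\{g, [1,x]\}}$ containing $k$ copies of $[1,x]$, I would construct a modified instance $\Omega'$ of $\holant{f}{\{g, S\}}$ in two steps. First, for each $[1,x]$ attached as a degree-one RHS vertex to some LHS vertex $u_j$, remove $[1,x]$ and attach the RHS port of a fresh copy of $S$ in its place, so that $S$'s LHS port becomes a dangling edge. Second, close the $k$ resulting dangling LHS-type edges via an auxiliary ``sink'' gadget $H_k$ built from further copies of $S$ together with $f,g$, designed so that after contraction its contribution equals a known positive expression in $x,y$ (typically a product of factors of the form $1+xy$ arising from closed $S$-cycles, the idempotence relation $S^n = (1+xy)^{n-1}S$ providing the book-keeping). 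Because $S$ is a tensor product, the value of $\Omega'$ equals $Z(\Omega)$ times this known factor, and we divide to recover $Z(\Omega)$.

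The main obstacle lies in constructing the sink gadget $H_k$ while respecting the bipartite arity congruence established earlier in the paper: any gadget built from $\{f,g,S\}$ has its numbers of LHS-type and RHS-type dangling edges congruent modulo $3$, so one cannot simply build a gadget with exactly $k$ RHS-type and zero LHS-type dangling ports for arbitrary $k$. I expect to handle this either by pre-padding $\Omega$ with a small known ``dummy'' subgraph to force $k \equiv 0 \pmod{3}$, or by a Vandermonde-style interpolation using parameterized sink gadgets $H_k^{(n)}$ whose internal size $n$ varies and then solving the resulting linear system for $Z(\Omega)$. In either case the hypothesis that $x,y\ge 0$ ensures the relevant factors (such as $1+xy$ and $1+y$) are strictly positive, hence invertible, so the final division is well-defined; the symmetric statement for $[1,y]$ on the LHS follows by swapping the roles of the two sides throughout.
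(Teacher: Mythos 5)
Your core insight is correct and matches the paper: since $S=\left(\begin{smallmatrix}1&x\\y&xy\end{smallmatrix}\right)$ has the rank-one factorization $\tbcolvec{1}{y}(1,x)$, attaching $S$'s RHS port to the spot previously occupied by $[1,x]$ has exactly the desired effect on the LHS vertex, at the cost of producing one dangling LHS-type edge carrying a $[1,y]$. You also correctly identify the arity-congruence obstruction as the real difficulty.

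However, the resolution of that obstruction is where your proposal falls short of a proof. Neither of your two suggested fixes is actually carried out. A ``dummy subgraph'' used to correct the count of $[1,x]$'s modulo something would itself have to respect the same congruence and contribute a known nonzero factor, which is not shown; and a Vandermonde interpolation over sink gadgets $H^{(n)}_k$ requires identifying a family of distinct, efficiently computable base values, which you do not do. Your suggested sink gadget built from $S$-cycles (invoking $S^n=(1+xy)^{n-1}S$) also does not obviously apply, since what must be closed is $N_u$ unary dangling edges of LHS type, not pairs of ports to be joined into cycles. Moreover, for the final division step you need a \emph{strictly positive} factor, which requires ruling out the degenerate case where $g$ is a multiple of $[0,1]^{\otimes n}$; you do not address this.

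The paper's proof avoids all of this with a cleaner device you did not consider: letting $m,n$ be the arities of $f,g$ and $k=\gcd(m,n)$, $s=n/k$, edge counting gives $N_u\equiv 0\bmod k$. Taking $s$ disjoint copies of $\Omega$ makes the total number $sN_u$ of dangling $[1,y]$'s a multiple of $n$, so they are absorbed by $t=N_u/k$ copies of $g$ (i.e.\ $g^{\otimes t}$), yielding a computable factor that is strictly positive because $y\ge 0$ and $g\not\parallel[0,1]^{\otimes n}$. This oracle call computes $(\operatorname{Holant}(\Omega))^s$, and one recovers $\operatorname{Holant}(\Omega)$ from its $s$-th power precisely because the value is nonnegative. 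Both the disjoint-copies trick and the use of nonnegativity to undo the $s$-th power are essential and absent from your sketch, so as written the argument has a genuine gap.
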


\begin{remark}
The same proof applies for  $\left( \begin{array}{cc} y &xy \\ 1 &x\end{array}\right)$
to get
unary signatures  $[1,x]$ on the RHS,
or unary $[y,1]$ on the LHS.
\end{remark}

\begin{proof}
We prove (\ref{getting-unary}).
Let  $f$ and $g$ have arity  $m$ and $n$ respectively.
We may assume 
that $g$ is not a
multiple of $[0,1]^{\otimes n}$ (including identically 0), for  otherwise 
$\holant{f}{\{g, [1,x]\}} $
  can be computed in PF, since
 all signatures on the RHS are degenerate and can be applied
 directed as unary signatures on copies of $f$.
 
Let  
 $k= \operatorname{gcd}\left(m,n \right)$, and $s = n/k \ge 1$.
 Consider
  any bipartite
 signature grid $\Omega = (G, \pi)$
 for  $\holant{f}{\{g, [1,x]\}}$.
 Let $N_f, N_g, N_u$ be  the numbers of occurrences of $f$,
 $g$, $[1,x]$ respectively. 
 Then we have 
 \[m N_f = n N_g + N_u,\]
 thus $N_u \equiv 0 \bmod k$.
 Let $t = N_u /k \ge 0$. We may assume
 $t \ge 1$, for otherwise $[1,x]$ does not occur and the reduction is trivial.
 
We will compute $(\operatorname{Holant}\left(G\right))^s$,  the $s$-th power of the value
$\operatorname{Holant}\left(G\right)$,
using an oracle for $\holant{f}{g}$. Since the value $\operatorname{Holant}\left(G\right)$ is nongenative,
we can obtain it from $(\operatorname{Holant}\left(G\right))^s$.

In $G$, we replace each occurrence of $[1,x]$
  with the binary
straddled signature $\left( \begin{array}{cc} 1 &x \\ y &xy\end{array}\right)$, with one end connected to  LHS,
and leaving one  edge yet to be connected to RHS. This creates  a total of
$N_u$ such edges.
This is equivalent to connecting $N_u$ copies of the  unary
signature $[1,x]$ to LHS, and having $N_u$ copies of the  unary
signature $[1,y]$ yet to be connected
 to RHS.
Now in $s$ disjoint copies of $\Omega$ there will be
$sN_u= stk$ copies of $[1,y]$ to be connected, to which we create
$t$ copies of $g$. In other words we take $g^{\otimes t}$ with total arity $stk$ and connect all $stk$ unary
signatures $[1,y]$ to it. Since $y \ge 0$ and $g$ is not a multiple of $[0,1]^{\otimes n}$,
we get an easily computable positive factor.
\end{proof}

\newpage
The main result of this paper is the following:
\begin{theorem}\label{main-thm}
$\holant{[x_0,x_1,x_2,x_3]}{(=_3)}$ where $x_i \geq 0$ for $i=0,1,2,3$ is \numP-hard except in the following cases, for which the problem is in $\operatorname{FP}$.
\begin{enumerate}
\item $[x_0,x_1,x_2,x_3]$ is degenerate;
\item $x_1 = x_2 = 0$;
\item $\left[\left(x_1 = x_3 = 0\right) \wedge \left(x_0 = x_2\right)\right]$ or $\left[\left(x_0 = x_2 = 0\right) \wedge \left(x_1 = x_3\right)\right]$.
\end{enumerate}
\end{theorem}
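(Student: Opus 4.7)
The plan is first to verify that the three listed cases lie in $\operatorname{FP}$, then establish $\numP$-hardness in every remaining case. On the tractable side: if $f$ is degenerate, then $f=(a,b)^{\otimes 3}$ for some $a,b\ge 0$, so the partition function factors into an easily computable product over vertices. If $x_1=x_2=0$, then $f$ is supported only on the all-$0$ and all-$1$ inputs, so $f$ behaves like a weighted $(=_3)$; each connected component of the input graph contributes a closed-form expression such as $x_0^{|U|}+x_3^{|U|}$. For case 3, a holographic transformation by the diagonal matrix $\operatorname{diag}(1,-1)$ (or a related sign change) converts one of the signatures $[x_0,0,x_2,0]$ or $[0,x_1,0,x_3]$ into a nonnegative multiple of $(=_3)$ while preserving $(=_3)$ up to a global factor, reducing again to case 2.

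For hardness, the strategy is to overcome the $\arity\equiv 0\bmod 3$ obstruction using straddled gadgets. The gadget $G_1$ placing $f$ on the LHS vertex and $(=_3)$ on the RHS vertex produces a binary straddled signature with signature matrix $M=\left(\begin{smallmatrix} x_0 & x_2 \\ x_1 & x_3\end{smallmatrix}\right)$. Iterating this gadget yields $M^k$ for every $k\ge 1$. When $M$ has two distinct nonzero eigenvalues, a Vandermonde system on the values $(\lambda_1^k,\lambda_2^k)$ allows us to interpolate any polynomial in $M$; in particular, we target a \emph{degenerate} binary straddled signature of the form $\left(\begin{smallmatrix} 1 & x \\ y & xy\end{smallmatrix}\right)$. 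Lemma~\ref{lm:dege-unary-split} then makes the unary $[1,x]$ available on the RHS (or $[1,y]$ on the LHS) free of charge.

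Once a unary signature is available on the RHS, connecting it to one of the three input slots of $f$ produces a \emph{binary} signature of arity $2$ on the LHS, which finally breaks the $\bmod 3$ arity restriction. With binary signatures and $(=_3)$ freely available, the problem reduces to a $\#\operatorname{CSP}$ over nonnegative symmetric Boolean signatures, where the Cai--Chen--Lu dichotomy~\cite{CCL} applies. The remaining task is to verify that whenever $[x_0,x_1,x_2,x_3]$ avoids all three FP cases, the derived binary signatures (combined with further gadgetry that concatenates $f$ with the newly obtained unaries) fall outside the tractable list of~\cite{CCL} and hence witness $\numP$-hardness.

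The main obstacle will be the boundary situations where the Vandermonde interpolation breaks down: namely, when $M$ has a repeated eigenvalue (which forces an algebraic relation among $x_0,\ldots,x_3$) or when an eigenvalue is zero or when the iterated powers $M^k$ all lie on a single line in the $2\times 2$ matrix space. In those cases the plan is to deploy alternative gadgets, for instance an $f$-$(=_3)$-$f$ chain or a triangle-like construction that yields a ternary straddled signature, so that a useful degenerate straddled signature is still obtainable; or, when the parameters are forced into a very special form by the eigenvalue degeneracy, to give direct reductions from known $\numP$-hard problems such as $\#$X3C or hard instances of Boolean $\#\operatorname{CSP}$. A careful case analysis on the pattern of zeros among $x_0,x_1,x_2,x_3$, together with the eigenstructure of $M$, will be needed; it is precisely this case analysis that confirms the three tractable families in the theorem statement are the only escapes from hardness.
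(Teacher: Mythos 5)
Your outline does match the paper's opening moves: normalize, form the straddled gadget $G_1$ with signature matrix $M = \left(\begin{smallmatrix} x_0 & x_2 \\ x_1 & x_3 \end{smallmatrix}\right)$, iterate to get $M^k$, interpolate a degenerate binary straddled signature via a Vandermonde system, and then apply Lemma~\ref{lm:dege-unary-split} to extract unary signatures that break the $\bmod 3$ arity restriction. That is exactly the paper's route (Lemma~\ref{degenerate}, Lemma~\ref{lm:dege-unary-split}, Lemma~\ref{interpolate}).

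The genuine gap appears at the next step. You assert that once binary signatures are available on the LHS together with $(=_3)$ on the RHS, the problem ``reduces to a $\#\operatorname{CSP}$'' so that the Cai--Chen--Lu dichotomy~\cite{CCL} can be applied. This does not follow. The resulting problem $\holant{[u,v,w]}{(=_3)}$ is still a 2--3 regular bipartite Holant problem in which the RHS carries \emph{only} $(=_3)$; $\#\operatorname{CSP}$ requires all equalities $(=_1),(=_2),(=_4),\ldots$ for free, and manufacturing those from $(=_3)$ alone requires specific LHS signatures (e.g.\ $[1,0,1]$ to chain two copies of $(=_3)$ into $(=_4)$, or $[1,1]$ to collapse $(=_3)$ to $(=_2)$) that you have not produced. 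Classifying $\holant{[u,v,w]}{(=_3)}$ is a nontrivial theorem in its own right --- Kowalczyk's dichotomy, Theorem~\ref{2-3} in the paper --- and it is exactly what the paper invokes after building a binary LHS signature $[1,a,b]$ or $[a,b,c]$. Your proposal silently substitutes a reduction that itself needs proof.

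A secondary gap is that your treatment of the failure modes of interpolation is only a plan, while in the paper those boundary cases carry most of the weight. Concretely: when $ab\neq 0$ but the seed unary is a row eigenvector of $M$, the paper derives the algebraic identity $(a^3-b^3-ab(1-c))(ab-c)=0$ and analyses each factor separately, eventually reaching Lemma~\ref{special} via gadget $G_4$; when $a=0$ or $b=0$ the matrix $M$ has a zero entry and the straddled interpolation degenerates, so the paper instead uses ternary gadgets $G_2$ and $G_4$ to manufacture a fresh LHS ternary signature with both middle entries nonzero and reapplies Lemma~\ref{starting}; and when $x_0=x_3=0$ with exactly one of $x_1,x_2$ zero (the signature $[0,1,0,0]$), $M$ is singular and no interpolation is possible, so the paper switches to a direct parsimonious reduction $\#\operatorname{SAT}\le\#\operatorname{3DM}\le\#\operatorname{X3C}\le\#\operatorname{RX3C}\le\holant{[0,1,0,0]}{(=_3)}$. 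Your reference to ``a triangle-like construction'' and ``direct reductions from $\#$X3C'' gestures in this direction but does not supply the actual constructions, and they are not interchangeable fallbacks --- each boundary case needs its own argument.
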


In case 1 the signature $[x_0,x_1,x_2,x_3]$  decomposes into
three unary signatures. In case 2 $[x_0, 0, 0, x_3]$ is a generalized
equality. In  case 3 the signature is in the affine class;
see more details about these tractable classes in~\cite{cai2017complexity}. Therefore the
$\Holant$ problem is in $\operatorname{FP}$ in cases 1--3. The main claim lies in that all other cases are \numP-hard. When proving our main result, we shall apply the following  dichotomy theorem on 2-3 $\Holant$ problem \cite{kowalczyk2010holant}:
\begin{theorem} \label{2-3}
Suppose $a,b \in \mathbb{C}$, and let $X = ab$, $Z = \left( \frac{a^3+b^3}{2} \right)^2$. Then $\holant{[a,1,b]}{\left(=_3\right)}$ is \numP-hard except in the following cases, for which the problem is in $\operatorname{P}$.
\begin{enumerate}
\item $X=1$;
\item $X=Z=0$;
\item $X=-1$ and $Z=0$;
\item $X=-1$ and $Z=-1$.
\end{enumerate}
\end{theorem}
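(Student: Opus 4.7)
My plan is to prove Theorem~\ref{2-3} in two stages: first verify tractability for each of the four exceptional cases, then establish \numP-hardness for all remaining $(a,b) \in \CC^2$ by gadget-based polynomial interpolation together with reduction from known hard base cases.

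\medskip

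\textbf{Tractability.} For $X = ab = 1$, the matrix $M = \left(\begin{smallmatrix} a & 1 \\ 1 & b \end{smallmatrix}\right)$ has determinant $ab - 1 = 0$; hence $[a,1,b]$ is a rank-one binary signature and factors as a tensor product of two unary signatures, so the Holant decouples across edges and becomes trivial. For $X = Z = 0$, the equations $ab = 0$ and $(a^3 + b^3)^2 = 0$ jointly force $a = b = 0$, giving the disequality $[0,1,0]$; after identifying each $(=_3)$-vertex with a single variable, the Holant reduces to counting proper 2-colorings of a 3-regular graph. For $X = -1, Z = 0$ and $X = -1, Z = -1$, the constraints cut out parametric families of (possibly complex) signatures including $[1,1,-1]$ and $[-1,1,1]$; each lies in the complex affine tractable class (directly, or after a holographic transformation), where tractability is classical.

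\medskip

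\textbf{Hardness via interpolation.} For all other $(a,b)$ I would use recursive gadget interpolation. The central building block is a ``chain gadget'' $C_k$, obtained by iteratively stringing together $[a,1,b]$ and $(=_3)$ respecting the bipartite structure (for example, by attaching two edges of a fresh $(=_3)$ to one edge of each of two new copies of $[a,1,b]$, and iterating). A transfer-matrix computation shows that $C_k$'s signature is an explicit polynomial in powers of $M$; expressing the entries via the eigenvalues $\lambda_\pm$ of $M$ and applying a Vandermonde argument, the family $\{C_k\}_{k \geq 1}$ spans a Zariski-dense subset of all binary symmetric signatures, provided that $\lambda_+ \neq \lambda_-$, neither eigenvalue is zero, and $\lambda_+/\lambda_-$ is not a root of unity. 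Under these generic conditions we can realize an arbitrary target binary signature on the LHS, reducing from a previously-known \numP-hard 2-3 Holant base case.

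\medskip

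\textbf{Main obstacle: boundary analysis.} The hard part is the algebraic locus where chain-gadget interpolation breaks down, namely when (a) $\lambda_+ = \lambda_-$, (b) $\lambda_+/\lambda_-$ is a nontrivial root of unity, or (c) some $\lambda_\pm = 0$. Two invariants govern this locus: the quadratic $X = ab$ (which captures $\det M$ up to a shift) and the cubic $Z = \left(\frac{a^3+b^3}{2}\right)^2$. The latter arises naturally from the ``triangle gadget'' (three copies of $[a,1,b]$ forming a triangle with a $(=_3)$ at each vertex), whose ternary symmetric signature $[a^3, a, b, b^3]$ carries cubic information orthogonal to $X$. Together $X$ and $Z$ pin down the boundary between tractable and hard regimes. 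For each boundary sub-case not in the four tractable ones, one must exhibit a tailored auxiliary gadget (e.g., triangle-based, anti-gadget, or a unary pinning produced by placing a self-loop on $(=_3)$, which yields a unary $[1,1]$) to re-enable the reduction. Enumerating these sub-cases and verifying a working construction in each is where the bulk of the technical work resides.
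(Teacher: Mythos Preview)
The paper does not prove Theorem~\ref{2-3} at all; it is quoted verbatim as a prior result of Kowalczyk (cited as~\cite{kowalczyk2010holant}) and used as a black box throughout the hardness arguments. So there is no ``paper's own proof'' to compare your proposal against---the authors simply invoke the dichotomy and move on.

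That said, your sketch is broadly in the spirit of how such 2--3 Holant dichotomies are actually established in the literature (transfer-matrix chains, Vandermonde interpolation, separate handling of the algebraic boundary where the eigenvalue ratio is a root of unity), and your tractability verifications for cases~1 and~2 are correct. Two small cautions if you intend to flesh it out. First, in the bipartite 2--3 setting you cannot ``place a self-loop on $(=_3)$'' directly, since every edge must pass through a copy of $[a,1,b]$; looping two wires of $(=_3)$ through one $[a,1,b]$ yields the unary $[a,b]$, not $[1,1]$. Second, your tractability claims for cases~3 and~4 are asserted rather than shown; the actual arguments (e.g.\ that $[1,1,-1]$ and the family with $a^3+b^3=\pm 2i$, $ab=-1$ are affine or matchgate-realizable after a holographic basis change) require explicit verification, and the boundary hardness analysis you flag as ``the bulk of the technical work'' is genuinely intricate in the original proof. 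As a proof \emph{plan} your outline is reasonable; as a proof it is far from complete, but since the paper itself offers nothing beyond a citation, there is no discrepancy to report.
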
 

In fact, since this paper mainly concerns with nonnegative valued functions, when establishing \#P-hardness we usually only need to consider the exceptional case $X=1$. 

\begin{figure}[t!]
  \centering
  \begin{subfigure}[b]{0.3\linewidth}
  \begin{center}
    \includegraphics[width=0.6\linewidth]{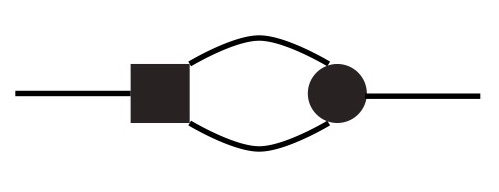}
    \caption{Binary straddled gadget $G_1$}
    \end{center}
  \end{subfigure}
  \begin{subfigure}[b]{0.3\linewidth}
    \begin{center}
    \includegraphics[width=0.6\linewidth]{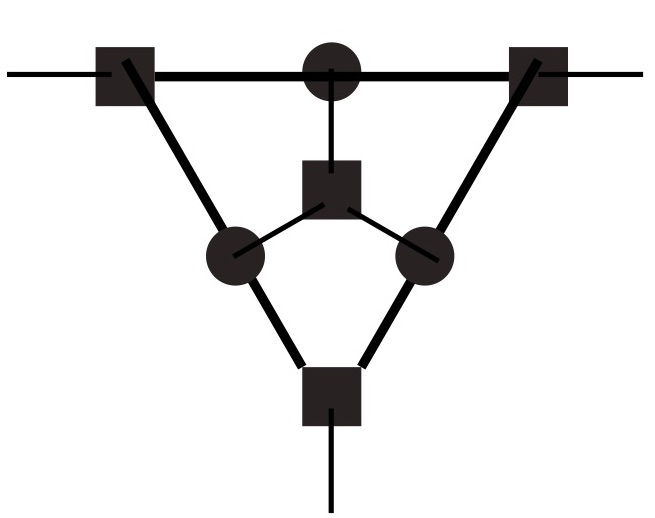}
    \caption{Ternary gadget $G_2$}
      \end{center}
  \end{subfigure}
  \\
  \begin{subfigure}[b]{0.3\linewidth}
  \begin{center}
    \includegraphics[width=0.8\linewidth]{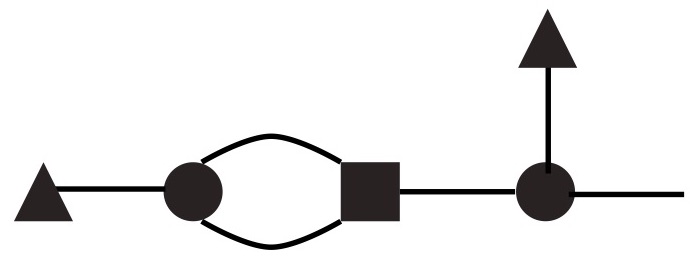}
    \caption{Unary gadget $G_3$}
      \end{center}
  \end{subfigure}
  \begin{subfigure}[b]{0.3\linewidth}
  \begin{center}
    \includegraphics[width=0.4\linewidth]{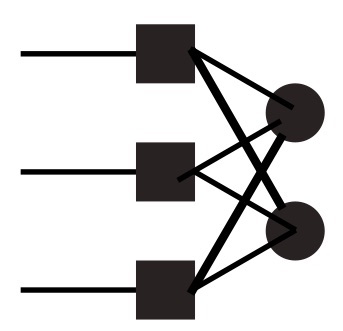}
    \caption{Ternary gadget $G_4$}
    \end{center}
  \end{subfigure}
    \caption{Some gadgets}
  \label{gadgets}
\end{figure}

\section{Interpolation From A Binary Straddled Signature}
 When 
$x_0$ and $x_3$ are not both  0, we can normalize the signature.
If $x_0 \ne 0$ we divide  the signature by $x_0$,
and get the form $[1,a,b,c]$, with $a,b,c \geq 0$.
If $x_0=0$, but $x_3 \ne 0$,
we can flip all 0 and 1 inputs, which amounts to a reversal
of the signature and get the above form.
This does not change the complexity since the Holant value is
only modified by a known nonzero factor.

Consider  arguably the simplest possible gadget $G_1$ in Figure \ref{gadgets}. We have the signature matrix $G_1 =
\left( \begin{array}{cc} 1 &b \\ a &c \end{array}\right)$. 
Let $\Delta = \sqrt{(1-c)^2+4ab}$, and assume for now 
$\Delta \ne 0$, and we take the positive square root.  Note that 
$\Delta =0$ iff $(c=1) \wedge (ab=0)$.
The matrix $G_1$  has two distinct eigenvalues
$\lambda = \frac{-\Delta+\left(1+c\right)}{2}$ and $\mu =
\frac{\Delta+\left(1+c\right)}{2}$.
If $a \neq 0$, let
$x = \frac{\Delta - \left(1-c\right)}{2a}$ and $y = \frac{\Delta + \left(1-c\right)}{2a}$. 
The matrix for $G_1$  has the Jordan Normal Form 
\begin{equation} \label{jordan form}
\left( \begin{array}{cc} 1 &b \\ a &c \end{array}\right) = 
\left( \begin{array}{cc} -x &y \\  1&1 \end{array}\right) 
\left( \begin{array}{cc} \lambda & 0 \\ 0 &\mu \end{array}\right)
 \left( \begin{array}{cc} -x &y \\  1&1 \end{array}\right)^{-1}.
\end{equation}
We now 
interpolate a binary degenerate straddled signature which will be used several times later.

\begin{lemma} \label{degenerate} 
Given the binary straddled signature $G_1 = \left( \begin{array}{cc} 1 &b \\ a &c \end{array}\right)$ with $a\neq 0$ and $\Delta > 0$, we can interpolate unary signatures $[1,x]$ on RHS or $[y,1]$ on LHS.
\end{lemma}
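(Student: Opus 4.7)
The plan is to realize the rank-one spectral projection of $G_1$ onto the $\mu$-eigenspace as an interpolated straddled signature, and then apply the Remark after Lemma~\ref{lm:dege-unary-split} to split it into the claimed unary signatures. Connecting the RHS dangling edge of one copy of $G_1$ to the LHS dangling edge of a fresh copy respects the bipartite structure and produces a straddled binary gadget whose signature matrix is the matrix product $G_1\cdot G_1$; iterating, a chain of $k$ copies realizes $G_1^k$. Using the Jordan decomposition~(\ref{jordan form}), write $G_1^k = \lambda^k R_\lambda + \mu^k R_\mu$, where $R_\lambda = P e_1 e_1^T P^{-1}$ and $R_\mu = P e_2 e_2^T P^{-1}$ are rank-one matrices. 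A direct computation gives
\begin{equation*}
R_\mu \;=\; \frac{1}{x+y}\left(\begin{array}{cc} y & xy \\ 1 & x \end{array}\right),
\end{equation*}
which is precisely the straddled degenerate form addressed by the Remark following Lemma~\ref{lm:dege-unary-split}.

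Next I would carry out a standard Holant interpolation. For any bipartite signature grid $\Omega$ for $\holant{f}{\{(=_3),\, R_\mu\}}$ with $t$ occurrences of $R_\mu$, let $\Omega_k$ be obtained by replacing each occurrence by the $k$-chain realizing $G_1^k$. Expanding each $G_1^k$ multilinearly,
\begin{equation*}
\Holant(\Omega_k) \;=\; \sum_{j=0}^{t} \lambda^{k(t-j)}\mu^{kj}\, d_j,
\end{equation*}
where $d_j$ collects the Holant values over the $\binom{t}{j}$ choices of $j$ positions being $R_\mu$ and the remaining $t-j$ being $R_\lambda$. The target quantity is $d_t = \Holant(\Omega)$, and each $\Holant(\Omega_k)$ is computed by one oracle call to $\holant{f}{(=_3)}$, giving $t+1$ linear equations in the $t+1$ unknowns $d_0,\ldots,d_t$.

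The main step is to verify invertibility of the resulting Vandermonde-type system. Setting $r = \mu/\lambda$, the coefficient matrix factors as $\lambda^{kt}(r^k)^j$, a Vandermonde matrix in $r^k$ after row rescaling, and is invertible iff $r$ is not a root of unity. Under our hypotheses $\lambda,\mu$ are real (since $\Delta>0$), and $\mu+\lambda=1+c>0$ together with $\mu-\lambda=\Delta>0$ force $r$ to be a real number different from $\pm 1$, hence not a root of unity. The degenerate subcase $\lambda=0$ (equivalently $c=ab$) is trivial: $G_1$ itself is then a scalar multiple of $R_\mu$, so no interpolation is needed. Solving the Vandermonde system recovers $d_t$, proving that $R_\mu$ can be interpolated in $\holant{f}{(=_3)}$; the Remark following Lemma~\ref{lm:dege-unary-split} then yields the unary signatures $[1,x]$ on RHS and $[y,1]$ on LHS.
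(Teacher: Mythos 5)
Your proof is correct and follows essentially the same approach as the paper's: iterate the $G_1$ gadget, use the Jordan decomposition to express $\Holant(\Omega_k)$ as a Vandermonde-type system in the eigenvalue ratio $\mu/\lambda$, solve it to interpolate the rank-one projection $D=R_\mu$, and then split via the Remark after Lemma~\ref{lm:dege-unary-split}. Your explicit treatment of the subcase $\lambda=0$ (equivalently $c=ab$, where $G_1$ is already a scalar multiple of $R_\mu$) is a little more careful than the paper, which asserts a full-rank Vandermonde that in fact degenerates there, although the paper's conclusion still survives because only the single coefficient $c_{0,n}$ is needed.
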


\begin{proof}
For $\Delta >0$,  we have $x+y = \Delta/a >0$.
Consider
$$
D = \frac{1}{x+y} \left( \begin{array}{cc} y & xy \\ 1 & x \end{array}\right) = \left( \begin{array}{cc} -x &y \\  1&1 \end{array}\right)  \left( \begin{array}{cc} 0 &0 \\  0&1\end{array}\right) \left( \begin{array}{cc} -x &y \\  1&1 \end{array}\right)^{-1}.
$$
Given any signature grid $\Omega$ where the binary degenerate straddled signature $D$ appears $n$ times, we form gadgets $G_1^s$ where $0 \le s \le n$ by iterating the $G_1$ gadget $s$ times and replacing each occurrence of $D$ with $G_1^s$. Denote the resulting
signature grid as $\Omega_{s}$. We stratify the assignments 
in the Holant sum for $\Omega$  according to assignments to
$\left( \begin{array}{cc} \lambda & 0 \\ 0 &\mu \end{array}\right)$
 as:
\begin{enumerate}
\item[-] $(0,0)$ $i$ times;
\item[-] $(1,1)$ $j$ times;
\end{enumerate}
with $i+j=n$; all  other assignments will contribute 0 in the $\operatorname{Holant}$ sum. Let $c_{i,j}$ be the sum over all such assignments of the products of evaluations (including the contributions from $\left( \begin{array}{cc} -x &y \\  1&1 \end{array}\right)$ and its inverse). Then we have
$$
\operatorname{Holant}_{\Omega_{s}} = \sum_{i+j=n} \left( \lambda^i \mu^j \right)^s \cdot c_{i,j}
$$
and
$
\operatorname{Holant}_{\Omega} = c_{0,n}.
$
Since $\Delta > 0$, the coefficients form a full rank Vandermonde matrix. Thus we can interpolate $D$ by solving the linear system of equations in polynomial time. Ignoring a nonzero factor,
we may split $D$ into unary signatures  
$[y,1]$ on LHS or $[1,x]$ on RHS.
\end{proof}

We have  $\Delta \geq |1-c|$, and $x, y \ge 0$.
Thus we can separate $D$ and obtain unary signatures 
 as specified in Lemma~\ref{lm:dege-unary-split}.
 

The following lemma lets us  interpolate unary signatures on the RHS from a binary  gadget with a straddled signature and a suitable unary signature $s$ on  the RHS. Mathematically, the proof is essentially the same as in \cite{vadhan2001complexity}, but technically Lemma~\ref{interpolate} applies to  binary straddled signatures. 

\begin{lemma} \label{interpolate}
Let $M \in \mathbb{R}^{2 \times 2}$ be the signature matrix for a binary straddled gadget which is diagonalizable with distinct eigenvalues, and let $s = [a,b]$ be a unary signature on RHS that is not a row eigenvector of $M$. Then $\{s\cdot M^{j}\}_{j \geq 0}$ can be used to interpolate any unary signature on RHS.
\end{lemma}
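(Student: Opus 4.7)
The plan is to carry out a Vandermonde-style interpolation in the eigenbasis of $M$, following the standard Vadhan template adapted to the bipartite straddled setting.

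First, I would diagonalize $M = P \Lambda P^{-1}$ with $\Lambda = \operatorname{diag}(\lambda_1, \lambda_2)$, and take a basis of left eigenvectors $u_1, u_2$ (row vectors) satisfying $u_i M = \lambda_i u_i$. Writing the given unary as $s = \alpha u_1 + \beta u_2$, the hypothesis that $s$ is not a row eigenvector of $M$ guarantees $\alpha, \beta \ne 0$. This yields the key identity
\[
s \cdot M^{j} \;=\; \alpha\, \lambda_1^{\,j}\, u_1 \;+\; \beta\, \lambda_2^{\,j}\, u_2 ,
\qquad j = 0, 1, 2, \ldots
\]

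Next, to interpolate an arbitrary RHS-unary $t$, I would fix any signature grid $\Omega$ containing $n$ occurrences of $t$ and, for each $j = 0, 1, \ldots, n$, form the modified grid $\Omega_j$ by replacing every occurrence of $t$ with the gadget realizing $s \cdot M^{j}$. Expanding each such occurrence in the eigenbasis and using multilinearity of the Holant sum stratifies the total into contributions indexed by the number $i$ of positions that ``pick'' $u_1$ and the remaining $n-i$ that pick $u_2$. Grouping by $i$ gives
\[
\Holant(\Omega_j) \;=\; \sum_{i=0}^n d_i \bigl(\lambda_1^{\,i} \lambda_2^{\,n-i}\bigr)^{j},
\]
where $d_i = \alpha^{i} \beta^{\,n-i} C_i$ and $C_i$ packages the sum over all $\binom{n}{i}$ choices of which positions receive $u_1$ (resp.\ $u_2$) of the corresponding Holant contributions of $\Omega$.

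Third, the $(n+1) \times (n+1)$ coefficient matrix in this system is Vandermonde in the $n+1$ quantities $\lambda_1^{\,i}\lambda_2^{\,n-i}$, $0 \le i \le n$; since $\lambda_1 \ne \lambda_2$ (and generically both are nonzero), consecutive ratios equal $\lambda_2/\lambda_1 \ne 1$, so these values are pairwise distinct and the system is invertible. Solving it in polynomial time recovers every $d_i$ from the $n+1$ oracle calls $\Holant(\Omega_j)$. Finally, decomposing the target as $t = \gamma u_1 + \delta u_2$ and applying the identical stratification directly to $\Omega$ gives
\[
\Holant(\Omega) \;=\; \sum_{i=0}^n (\gamma/\alpha)^{\,i}\, (\delta/\beta)^{\,n-i}\, d_i ,
\]
which we can evaluate from the recovered $d_i$.

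The main obstacle is ensuring the Vandermonde node distinctness: the distinct-eigenvalue hypothesis supplies it in the generic case $\lambda_1, \lambda_2 \ne 0$; any subcase with a zero eigenvalue must be dispatched by a short side argument observing that $s$ and $s \cdot M$ already span $\mathbb{R}^2$, collapsing the interpolation to a trivial $2 \times 2$ change of basis. The assumption that $s$ is not a row eigenvector is used precisely to keep both $\alpha$ and $\beta$ nonzero, so that the change-of-basis factors $\gamma/\alpha$ and $\delta/\beta$ in the final formula are well defined and the factor $\alpha^{i}\beta^{n-i}$ inside each $d_i$ can be divided out without loss.
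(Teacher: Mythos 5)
Your proposal follows the route the paper itself indicates: the paper supplies no proof of this lemma, only the remark that ``the proof is essentially the same as in Vadhan,'' and your Vandermonde-in-the-eigenbasis argument is exactly that template, with the stratification into $d_i = \alpha^i\beta^{n-i}C_i$ and the recovery formula $\Holant(\Omega)=\sum_i(\gamma/\alpha)^i(\delta/\beta)^{n-i}d_i$ both set up correctly, and the ``not a row eigenvector'' hypothesis used correctly to guarantee $\alpha,\beta\neq 0$.

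There are, however, two genuine gaps in the reasoning. First, the sentence ``consecutive ratios equal $\lambda_2/\lambda_1 \ne 1$, so these values are pairwise distinct'' is an invalid inference. The nodes are $\nu_i = \lambda_1^i\lambda_2^{n-i} = \lambda_2^n(\lambda_1/\lambda_2)^i$, and a geometric sequence with common ratio $r\ne 1$ can still have repeated terms when $r$ is a root of unity; over $\mathbb{R}$ the bad case is $r=-1$, i.e.\ $\lambda_1 = -\lambda_2$, which is perfectly compatible with ``diagonalizable with distinct eigenvalues.'' You need to rule this out explicitly. In the paper's one invocation of the lemma, $M = G_1 = \left(\begin{smallmatrix}1 & b\\ a & c\end{smallmatrix}\right)$ has trace $1+c > 0$, so $\lambda_1+\lambda_2>0$ and $\lambda_1=-\lambda_2$ is impossible; that is the fact to cite, not the consecutive-ratio heuristic.

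Second, the zero-eigenvalue ``side argument'' does not work as described. If $\lambda_2 = 0$ then $\nu_i = 0$ for all $i<n$, so the Vandermonde system collapses to rank $2$: the oracle calls recover only $d_n$ and $\sum_{i<n}d_i$, which is insufficient to evaluate $\sum_i(\gamma/\alpha)^i(\delta/\beta)^{n-i}d_i$ for a generic target $t$ (i.e.\ when both $\gamma,\delta\neq 0$). The observation that $s$ and $sM$ span $\mathbb{R}^2$ does not rescue this, because $\Holant(\Omega)$ is a degree-$n$ form in the unary occupying those $n$ positions, not a linear function of it; knowing $t=\gamma' s + \delta'(sM)$ gives a $2^n$-term expansion, not a polynomial-time reduction. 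The honest fix is to add ``both eigenvalues nonzero'' (equivalently $\det M \ne 0$) to the hypotheses, and to note that in the paper's use this holds because Case 1 of Lemma~\ref{starting} assumes $ab-c\ne 0$, i.e.\ $\det G_1 = c-ab \ne 0$.
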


\section{A Basic Lemma}
In this section,  we  prove Lemma \ref{starting} which will be invoked in the remaining cases. 


When $\neg \left(x_0 = x_3 = 0\right)$, we normalize the signature $[x_0,x_1,x_2,x_3]$ to be $[1,a,b,c]$,
where $a,b,c \geq 0$. We first deal with a special case when $a=b$ and $c=1$, which will be used in the proof of  Lemma~\ref{starting}. By a slight abuse of notation, we say $[1,a,b,c]$
is \numP-hard or in FP if the 
 problem $\holant{[1,a,b,c]}{\left( =_3\right)}$ is 
\numP-hard or in FP, respectively.


\begin{lemma} \label{special}
$[1,a,a,1]$ is \numP-hard unless $a=0$ or $a=1$ in which case the problem is in $\operatorname{FP}$.
\end{lemma}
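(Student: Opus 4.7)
My plan is to dispose of the tractable values $a=0$ and $a=1$ by inspection, and then reduce every other case to the 2--3 Holant dichotomy (Theorem~\ref{2-3}). When $a=0$, $[1,0,0,1]$ is the equality $(=_3)$ and the problem amounts to counting assignments constant on each connected component; when $a=1$, $[1,1,1,1] = [1,1]^{\otimes 3}$ is degenerate (case~1 of Theorem~\ref{main-thm}). Both are trivially in $\operatorname{FP}$.

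For $a > 0$ with $a \neq 1$, I would apply the binary straddled gadget $G_1$ of Figure~\ref{gadgets}(a) with $f = [1,a,a,1]$ on the square; the resulting signature matrix is $\left(\begin{smallmatrix} 1 & a \\ a & 1 \end{smallmatrix}\right)$ with discriminant $\Delta = \sqrt{(1-1)^2 + 4a^2} = 2a > 0$. Lemma~\ref{degenerate} then interpolates the associated degenerate straddled signature (with parameters specializing to $x = y = 1$), and Lemma~\ref{lm:dege-unary-split} delivers the unary $[1,1]$ on the RHS at polynomial cost. Attaching this $[1,1]$ to one input of $f$ yields the binary signature $g = [1+a,\, 2a,\, 1+a]$ on the LHS, so that $\holant{g}{(=_3)} \leq_T \holant{[1,a,a,1]}{(=_3)}$.

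Finally I would rescale $g$ by the positive factor $1/(2a)$ to place it in the standard form $[A',1,A']$ with $A' = (1+a)/(2a) > 0$, and invoke Theorem~\ref{2-3}. Since $X = (A')^2 > 0$, the tractable cases requiring $X \in \{0,-1\}$ are vacuous; the remaining exception $X = 1$ is equivalent to $1+a = 2a$, i.e.\ $a = 1$, which is excluded. Hence $\holant{g}{(=_3)}$ is \numP-hard, and by the reduction above so is $\holant{[1,a,a,1]}{(=_3)}$. I do not expect a serious obstacle here: the delicate ingredients (the arity restriction and the Vandermonde interpolation) are already encapsulated in Lemmas~\ref{lm:dege-unary-split} and~\ref{degenerate}, and the only substantive check is that the boundary $a=1$ is precisely where $X=1$ falls, which follows from $(1+a)^2 = 4a^2$ together with $a > 0$.
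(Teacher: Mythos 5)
Your proof is correct and follows essentially the same route as the paper: dispose of $a\in\{0,1\}$ by inspection, use the $G_1$ straddled gadget with $\Delta=2a>0$ together with Lemma~\ref{degenerate} (and Lemma~\ref{lm:dege-unary-split}) to obtain the unary $[1,1]$ on the RHS, connect it to $[1,a,a,1]$ to get $[1+a,2a,1+a]$, and invoke Theorem~\ref{2-3}. The only difference is that you spell out the normalization and the check that $X=1$ forces $a=1$, which the paper leaves implicit.
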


\begin{proof}
When $a=0$ or 1,  the signature $[1,a,a,1]$  is 
clearly in $\operatorname{FP}$. Suppose $a \neq 0, 1$, we have $\Delta =2a >0$. By Lemma \ref{degenerate}, we can interpolate the unary signature $[1,x] = [1,1]$ on RHS. Connect $[1,1]$ on RHS to $[1,a,a,1]$ on LHS, we get the binary signature $[1+a, 2a, 1+a]$ on LHS. Invoke Theorem \ref{2-3},  as $a \ne1$,
the problem $[1+a,2a,1+a]$ is \numP-hard, and thus $[1,a,a,1]$ is \numP-hard. 
\end{proof}

 We are now ready to prove Lemma~\ref{starting},
  the basic lemma.

\begin{lemma} \label{starting} 
When $ab \neq 0$, the problem $[1,a,b,c]$ is \numP-hard unless it is degenerate in which case the problem is in $\operatorname{FP}$.
\end{lemma}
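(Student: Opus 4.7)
The plan is a two-step interpolation landing in the 2-3 Holant dichotomy of Theorem~\ref{2-3}. Since $ab>0$ gives $\Delta>0$, Lemma~\ref{degenerate} and Lemma~\ref{lm:dege-unary-split} provide free use of $[1,x]$ on RHS and $[y,1]$ on LHS with $x,y>0$. Attaching $[1,x]$ to one input of $f=[1,a,b,c]$ produces the binary $B=[1+ax,\,a+bx,\,b+cx]$ on LHS; the defining relation $ax^2+(1-c)x-b=0$ simplifies $b+cx$ to $x(1+ax)$. Theorem~\ref{2-3} then gives \numP-hardness unless $X=\frac{x(1+ax)^2}{(a+bx)^2}=1$. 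If this exceptional equality occurs, then $a+bx=\sqrt{x}\,(1+ax)$ and $B$ becomes $(1+ax)\,[1,\sqrt{x}]^{\otimes 2}$, a rank-one tensor, so $B$ may be replaced by two separate copies of the unary $[1,\sqrt{x}]$ on LHS. Contracting $[y,1]$ and $[1,\sqrt{x}]$ into two of the three inputs of a single $(=_3)$ then yields the unary $s=[y,\sqrt{x}]$ on RHS.

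The key point is to verify that $s$ is not a row eigenvector of the signature matrix of $G_1$, whose row eigenvectors are proportional to $(1,x)$ and $(1,-y)$. The latter is excluded by positivity; $s\propto(1,x)$ would force $\sqrt{x}=xy=b/a$, i.e.\ $x=b^2/a^2$. Substituting this value into both the defining equation $ax^2+(1-c)x-b=0$ and the $X=1$ condition $(ac-b^2)x^2+(c-ab)x+(b-a^2)=0$ reduces, after algebraic simplification, to the single factor $(b-a^2)(a^3+ab+2b^3)=0$; since $a,b>0$ the second factor is strictly positive, forcing $b=a^2$ and then $c=a^3$, precisely the excluded degenerate case. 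Hence for nondegenerate $f$, $s$ is not an eigenvector, and Lemma~\ref{interpolate} lets us interpolate any unary $[\alpha,\beta]$ on RHS. Connecting such $[\alpha,\beta]$ to $f$ produces a new binary $[\alpha+a\beta,\,a\alpha+b\beta,\,b\alpha+c\beta]$; the tractability locus of Theorem~\ref{2-3} for this binary consists of only finitely many projective points, so we may pick $[\alpha,\beta]$ outside it and conclude \numP-hardness. The main obstacle is the algebraic verification that $X=1$ combined with $x=b^2/a^2$ forces degeneracy --- without this step the eigenvector obstruction could prevent the application of Lemma~\ref{interpolate}.
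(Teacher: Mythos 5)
Your route is genuinely different from the paper's. The paper interpolates $[1,x]$ and $[y,1]$ exactly as you do, but then forms the unary gadget $G_3$ (with signature $[y^2+yb,\,ya+c]$ on RHS), checks the row-eigenvector obstruction, solves $\frac{ya+c}{y^2+yb}=x$ using $xy=b/a$, and factors the resulting polynomial as $(a^3-b^3-ab(1-c))(ab-c)=0$, then handles the two factors (plus the easy generic case) via three separate subcases, each landing in Theorem~\ref{2-3}. You instead apply Theorem~\ref{2-3} immediately to the binary $B=f[1,x]=[1+ax,\,a+bx,\,x(1+ax)]$, and only the exceptional branch $X=1$ needs further work; there you exploit the degeneracy of $B$ to manufacture a new RHS unary $s=[y,\sqrt{x}]$, check the eigenvector obstruction for $s$, and verify that the obstruction point $x=b^2/a^2$ together with $X=1$ forces degeneracy of $f$ via the factor $(b-a^2)(a^3+ab+2b^3)$. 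Your algebra is correct (I checked the reduction to $(b-a^2)(a^3+ab+2b^3)=0$, and that the factor is a quadratic form in $(\alpha,\beta)$ with at most two projective zeros in the final step), and in effect your route absorbs the paper's Cases 2 and 3 into a single computation, which is a real simplification.

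There is one genuine gap you should not gloss over: the step ``$B$ may be replaced by two separate copies of $[1,\sqrt{x}]$ on LHS,'' followed by using \emph{one} such copy in the $(=_3)$ contraction. The paper's Lemma~\ref{lm:dege-unary-split} is stated and proved only for a \emph{straddled} degenerate binary $\left(\begin{smallmatrix}1&x\\ y&xy\end{smallmatrix}\right)$ --- its proof hinges on the arity-bookkeeping identity $mN_f=nN_g+N_u$ and the disjoint-copies trick. Your $B$ has both legs on the LHS, so that lemma does not apply verbatim; each $[1,\sqrt{x}]$ is only available in pairs, and in a given instance for the interpolation in Lemma~\ref{interpolate} the number of copies of $s$ (hence of $[1,\sqrt{x}]$) can be odd. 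The fix is the same trick the paper uses in Lemma~\ref{lm:dege-unary-split}: run the reduction on two disjoint copies of each instance so the count becomes even, recover the Holant value from its square using nonnegativity. This is fixable in a few lines, but as written your proof invokes a ``split'' that the paper's lemma does not cover, so you need to either prove the LHS-variant of Lemma~\ref{lm:dege-unary-split} or adapt its argument inline.
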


\begin{proof}
As $ab \neq 0$, we have  $a \neq 0$, and $\Delta  >0$.
By Lemma \ref{degenerate}, we can interpolate the unary signature $[1,x]$ on RHS or $[y,1]$ on LHS where $x = \frac{\Delta - \left(1-c\right)}{2a}$ and $y = \frac{\Delta + \left(1-c\right)}{2a}$. Consider the unary gadget $G_3$
in Figure~\ref{gadgets} where we place the circles to be $\left( =_3\right)$, the triangles to be $[y,1]$, and the square to be $[1,a,b,c]$.
It has the unary signature  $[y^2+yb, ya+c]$  on the RHS. By Lemma \ref{interpolate}, we can interpolate any unary gadget, in particular $\Delta_{0} := [1, 0]$ and $\Delta_{1} := [0,1]$, on the RHS unless the  signature of $G_3$ is a row eigenvector of $G_1$. By Equation (\ref{jordan form}), the row eigenvectors of $G_1$ are the rows of
$\left( \begin{array}{cc} -x &y \\  1&1 \end{array}\right)^{-1}$,
namely proportional to $[1,-y]$ and $[1,x]$. Since $y > 0$ by $ab \ne 0$, the only exception is $\frac{ya+c}{y^2+yb} = x$.

Observe that $xy = \frac{b}{a}$. 
Solve for $y$ we get $\left(b-a^2 \right)y = \left(ac-b^2 \right)$. If $b=a^2$, then $ac = b^2$ and thus $[1,a,b,c]$ is degenerate. Otherwise, $y = \frac{ac-b^2}{b-a^2}$. Now plug into $y =  \frac{\Delta + \left(1-c\right)}{2a}$, we have $\Delta+\left(1-c\right) = \frac{2a(ac-b^2)}{b-a^2}$ which implies 
$\left(a^3-b^3-ab\left(1-c\right)\right)(ab-c) = 0$
(the high order multi-variable polynomial magically factors out).

We now divide our discussion into three cases: 
(1) $\left[ \left(a^3-b^3-ab \left(1-c\right) \neq 0 \right) \wedge \left(ab-c \neq 0\right)\right]$, 
(2) $a^3-b^3-ab\left(1-c\right) = 0$, 
and (3) $ab-c = 0$.

\textbf{Case 1:} $\left(a^3-b^3-ab\left(1-c\right) \neq 0 \right) \wedge \left(ab-c \neq 0\right)$

We can interpolate $\Delta_0$ and $\Delta_1$ on RHS. By connecting $\Delta_0$ and $\Delta_1$ on RHS to $[1,a,b,c]$ on LHS, we get the binary signatures $[1,a,b]$ and $[a,b,c]$ on LHS. That is we have  
$$
\holant{[1,a,b]}{\left(=_3\right)} \leq_T \holant{[1,a,b,c]}{\left(=_3\right)}
$$
and 
$$
\holant{[a,b,c]}{\left(=_3\right)} \leq_T \holant{[1,a,b,c]}{\left(=_3\right)}.
$$
By Theorem \ref{2-3}, the problem $\holant{[1,a,b,c]}{\left(=_3\right)}$ is \numP-hard unless $\frac{1}{a} \cdot \frac{b}{a} = 1$ and $\frac{a}{b} \cdot \frac{c}{b} = 1$, in which case
the signature $[1,a,b,c]$ is degenerate and thus the problem is in FP.

\textbf{Case 2:} $a^3-b^3-ab\left(1-c\right) = 0$

We have $1-c = \frac{a^3-b^3}{ab}$ and thus $\Delta = \sqrt{(1-c)^2+4ab} = \frac{a^3+b^3}{ab}$. Thus the unary signature interpolated by Lemma \ref{degenerate} on RHS is $[1,x] = [1,\frac{\Delta - \left(1-c\right)}{2a}] = [1, \frac{b^2}{a^2}]$. Connect $[1,x]$ on RHS to $[1,a,b,c]$ on LHS, we get the binary signature $[1+\frac{b^2}{a}, a+\frac{b^3}{a^2}, b+\frac{b^2c}{a^2}]$ on LHS. By Theorem \ref{2-3}, the problem $[1+\frac{b^2}{a}, a+\frac{b^3}{a^2}, b+\frac{b^2c}{a^2}]$ is \numP-hard and thus the problem $[1,a,b,c]$ is \numP-hard unless
$(1+\frac{b^2}{a}) (b+\frac{b^2c}{a^2})
=
({a+\frac{b^3}{a^2}})^2$,
which, after substituting $c$,
simplifies to $\left(a^2-b \right)\left(a^3+ab+2b^3 \right) = 0$. Since $a,b >0$, we have $a^3+ab+2b^3  \neq 0$ and thus $a^2-b = 0$. 
This combined with $a^3-b^3=ab\left(1-c\right)$ gives $c = a^3$, and thus $[1,a,b,c]$ is degenerate.

\textbf{Case 3:} $ab-c = 0$

Observe that under this case we have $\Delta = 1+c$. The unary signature interpolated on RHS is $[1,x] = [1,\frac{\Delta - \left(1-c\right)}{2a}]= [1,\frac{c}{a}] =[1,b]$. We connect $[1,x]$ on RHS to $[1,a,b,c]$ on LHS to get binary signature $[1+ab,a+b^2, b+bc]$ on LHS. That is we have
$$
\holant{[1+ab,a+b^2,b+bc]}{\left(=_3\right)} \leq_T \holant{[1,a,b,c]}{\left(=_3\right)}
$$
By Theorem \ref{2-3}, this problem is \numP-hard unless $(1+ab)(b+bc) = (a+b^2)^2$ which implies $\left(a^2-b \right)\left(b^3-1 \right)=0$. If $a^2-b=0$, since we are under the case $ab-c=0$, the signature $[1,a,b,c]$ is degenerate. If $b^3-1=0$, since $b$ is a positive real number, we have $b=1$. Thus the signature $[1,a,b,c]$ is simply $[1,a,1,a]$. 

Now consider the ternary gadget $G_4 = [2+2a^3, 2a+2a^2, 2a+2a^2, 2+2a^3]$ on LHS where we place the circles to be $\left( =_3\right)$ and the squares to be $[1,a,1,a]$. This signature by $G_4$ has the form in  Lemma \ref{special}, and we see that it is
 \numP-hard unless $2+2a^3 = 2a+2a^2$. So
the problem $[1,a,1,a]$ is \numP-hard and thus the problem $[1,a,b,c]$ is \numP-hard unless $a^3 -a^2 -a +1
= (a-1)^2(a+1) =0$,
which implies $a=1$. When $a=1$, the signature $[1,a,1,a]$ is degenerate.

The proof is now complete.
\end{proof}

We are now ready to prove a series of lemmas which lead to our main result, Theorem~\ref{main-thm}. The basic idea is to reduce cases to Lemma \ref{starting}, and handle exceptional cases separately. 

\section{Case $\neg (x_0 = x_3 = 0)$}
With the help of Lemma \ref{starting} we can quickly finish this
case. We may normalize $[x_0,x_1,x_2,x_3]$ to $[1,a,b,c]$
by dividing a nonzero factor, and flipping all input 0's and 1's
if necessary
(i.e., taking the reversal of the signature).
The exceptional cases in Lemma \ref{starting} for $[1,a,b,c]$
are $a=0$ or $b=0$.
We first consider the case $a \ne 0$ but $b=0$.

\begin{lemma} \label{b=0}
If $a \neq 0$ and $b = 0$, then $[1,a,b,c]$ is \numP-hard.
\end{lemma}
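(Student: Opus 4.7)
The plan is to case-split on $c \geq 0$ and, in each regime, produce a binary symmetric LHS signature that Theorem~\ref{2-3} certifies to be \numP-hard. When $c \neq 1$, we have $\Delta = |1-c| > 0$, so Lemma~\ref{degenerate} applies to $G_1 = \left( \begin{array}{cc} 1 & 0 \\ a & c \end{array}\right)$. If $c < 1$, it yields the unary $[1, 0]$ on the RHS (since $x = 0$); attaching it to $f = [1, a, 0, c]$ produces the binary $[1, a, 0]$ on the LHS, which has $X = 0$ and $Z \neq 0$ in the notation of Theorem~\ref{2-3}, hence is \numP-hard.

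If $c > 1$, Lemma~\ref{degenerate} yields the RHS unary $[1, (c-1)/a]$ and the LHS unary $[0, 1]$. Attaching $[1, (c-1)/a]$ to $f$ gives the binary $[c, a, c(c-1)/a]$ on the LHS with $X = c^2(c-1)/a^3$, which by Theorem~\ref{2-3} is \numP-hard unless $a^3 = c^2(c-1)$. In that exceptional sub-case, use the LHS unary $[0, 1]$ as the ``triangles'' in $G_3$ (i.e., set $y=0$); the resulting RHS unary is $s = [y^2+yb, ya+c]|_{y=0, b=0} = [0, c]$. Since $G_1$ has distinct eigenvalues $1, c$ with row eigenvectors proportional to $(1, 0)$ and $(a, c-1)$, $s$ is not a row eigenvector of $G_1$, so Lemma~\ref{interpolate} lets us interpolate $[1, 0]$ on the RHS and reduce again to $[1, a, 0]$.

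The main obstacle is the case $c = 1$: there $\Delta = 0$ and the triangular matrix $G_1$ is not diagonalizable, so neither Lemma~\ref{degenerate} nor Lemma~\ref{interpolate} applies directly. The plan is to build a richer binary straddled gadget $M'$ from two copies $f_1, f_2$ of $f$ and two copies $C_1, C_2$ of $(=_3)$, wired as follows: the three edges of $C_1$ are joined to one LHS edge of $f_1$ and to two LHS edges of $f_2$; the three edges of $C_2$ are joined to a second LHS edge of $f_1$, the remaining LHS edge of $f_2$, and a dangling RHS edge; the third LHS edge of $f_1$ is the dangling LHS edge. A direct calculation gives $M' = \left( \begin{array}{cc} 1 & a^2 \\ a & c^2 \end{array}\right)$, so for $c = 1$ we have $\Delta' = 2 a^{3/2} > 0$. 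Applying Lemma~\ref{degenerate} to $M'$ yields the RHS unary $[1, \sqrt{a}]$; attaching it to $f = [1, a, 0, 1]$ produces the binary $[1 + a^{3/2}, a, \sqrt{a}]$ on the LHS, whose $X = (1 + a^{3/2})\sqrt{a}/a^2 = 1 + a^{-3/2} > 1$, forcing \numP-hardness by Theorem~\ref{2-3}.
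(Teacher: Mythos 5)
Your proof is correct, but it takes a genuinely different and substantially longer route than the paper's. The paper disposes of this lemma in one move: applying the ternary gadget $G_2$ to $[1,a,0,c]$ yields the ternary signature $[1+3a^3,\,a+a^4,\,a^2,\,a^3+c^4]$ whose two middle entries are both nonzero (since $a>0$), so Lemma~\ref{starting} applies directly, and the degenerate exceptional case collapses to the numerical contradiction $\tfrac{1}{2}=1+c^4$. You instead split on $c$. For $c<1$ and $c>1$ your eigenvalue computations are right: $G_1=\left(\begin{smallmatrix}1&0\\ a&c\end{smallmatrix}\right)$ has $x=0$ when $c<1$ giving $[1,0]$ on the RHS and hence $[1,a,0]$; and for $c>1$ the exceptional locus $a^3=c^2(c-1)$ is correctly escaped via $G_3$ with $y=0$, since $[0,c]$ is not proportional to either left eigenvector $(1,0)$ or $(a,c-1)$ of $G_1$, so Lemma~\ref{interpolate} recovers $[1,0]$. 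Your $c=1$ fix is the interesting part: the two-$f$, two-$(=_3)$ straddled gadget you describe indeed has matrix $M'(u,v)=\sum_{e}f(u,e,v)f(e,e,v)=\left(\begin{smallmatrix}1&a^2\\ a&c^2\end{smallmatrix}\right)$, which at $c=1$ has $\Delta'=2a^{3/2}>0$, producing $[1,\sqrt a]$ on the RHS, the binary $[1+a^{3/2},\,a,\,\sqrt a]$, and $X=1+a^{-3/2}>1$. One point worth making explicit: Lemma~\ref{degenerate} is stated for the specific gadget $G_1$, whereas you apply it to $M'$. This is fine because the proof of Lemma~\ref{degenerate} (iterate the straddled gadget, Vandermonde-interpolate) and the splitting argument of Lemma~\ref{lm:dege-unary-split} depend only on $M'$ being a straddled bipartite gadget built from $f$ and $(=_3)$ with nonnegative entries, first column $(1,a)^{\mathsf T}$, $a\neq 0$, and distinct eigenvalues; none of it uses the internal structure of $G_1$. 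But you should say so, since as literally stated the lemma does not cover $M'$. In summary: your argument is sound and self-contained, but the paper's single-gadget reduction to Lemma~\ref{starting} is considerably shorter and avoids the non-diagonalizable $c=1$ obstacle entirely; the main thing your approach buys is an explicit illustration of how to manufacture a new straddled gadget with better spectral properties when $G_1$ degenerates.
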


\begin{proof}
By gadget $G_2$ where we place the squares to be $[1,a,0,c]$ and the circles to be $\left(=_3\right)$, we have a ternary signature $[1+3a^3, a+a^4, a^2, a^3+c^4]$ on LHS. Thus Lemma \ref{starting} applies (after 
normalizing) and $[1+3a^3, a+a^4, a^2, a^3+c^4]$ is \numP-hard unless
it is degenerate,
i.e., $(a+a^4)^2 = (1+3a^3)(a^2)$ and $(a^2)^2= (a+a^4)(a^3+c^4)$.
The first equality 
forces  $a=1$
for positive $a$.
Then we get $\frac{1}{2} = 1+c^4 \ge 1$, which is a contradiction.
\end{proof}

Now consider the case  $a=0$. We have the following subcases
(1) $c=1$, (2) $c=0$, and (3) $\left(c\neq 1\right) \wedge \left(c\neq 0\right)$. These are handled 
by the following three lemmas.

\begin{lemma}
If $a = 0$ and $c = 1$, then $[1,0,b,1]$ is \numP-hard unless $b=0$, in which case the problem is in $\operatorname{FP}$.
\end{lemma}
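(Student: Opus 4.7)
If $b=0$ the signature is just $(=_3)$, so the problem reduces to $\holant{(=_3)}{(=_3)}$, which asks each connected component of the bipartite graph to be monochromatic; hence the value equals $2^{c}$ where $c$ is the number of components, and the problem is in $\operatorname{FP}$.

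For $b>0$, the plan is to reduce to Lemma~\ref{starting}. Because $f$ has a zero at Hamming weight $1$, Lemma~\ref{starting} cannot be applied directly, so the first step is to fabricate, entirely on the LHS, a ternary signature whose weight-$1$ and weight-$2$ entries are both nonzero. A natural candidate is a ``double-bridge'' gadget: three copies of $f$ on the LHS and two copies of $(=_3)$ on the RHS, wired so that each $(=_3)$ has exactly one edge to each copy of $f$, leaving one dangling edge per $f$. All three dangling edges lie on the LHS and $3\equiv 0\pmod 3$, so this respects the bipartite arity restriction. Stratifying by the pair $(y_1,y_2)\in\{0,1\}^2$ of values forced at the two $(=_3)$'s, each copy of $f$ sees $(x_i,y_1,y_2)$, and a direct calculation will yield the ternary LHS signature
\[
\Gamma \;=\; [\,1+b^{3},\; b^{2},\; b,\; 1+2b^{3}\,].
\]

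After dividing by $1+b^{3}>0$ one obtains a signature $[1,a',b',c']$ with $a'=b^{2}/(1+b^{3})>0$ and $b'=b/(1+b^{3})>0$, so the hypothesis $a'b'\ne 0$ of Lemma~\ref{starting} holds. To invoke Lemma~\ref{starting} one also needs $\Gamma$ to be nondegenerate; the first catalecticant condition $x_{0}x_{2}=x_{1}^{2}$ reads $(1+b^{3})b=b^{4}$, which forces $b=0$ and so fails under the assumption $b>0$. Lemma~\ref{starting} therefore yields \numP-hardness for $\holant{\Gamma}{(=_3)}$, which transfers by the gadget reduction to show that $\holant{[1,0,b,1]}{(=_3)}$ is \numP-hard. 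The only way the plan could get derailed is an accidental vanishing of the weight-$1$ or weight-$2$ coordinate of $\Gamma$, but these come out to $b^{2}$ and $b$ and are safe for $b>0$; everything else is a straightforward application of the apparatus already in place.
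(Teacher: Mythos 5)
Your proof is correct, but it takes a different route from the paper's. The paper's argument for this lemma is a two-line reduction: flipping all $0$s and $1$s turns $[1,0,b,1]$ into its reversal $[1,b,0,1]$, which is exactly the ``\,$a\neq 0$, $b=0$\,'' situation already handled by Lemma~\ref{b=0}, so no new computation is needed. You instead proceed directly: the ``double-bridge'' gadget you describe (three copies of $f$ on the LHS, two copies of $(=_3)$ on the RHS, each $(=_3)$ sending one edge to each $f$, one dangling edge per $f$) is precisely the paper's gadget $G_4$, which is used elsewhere in the paper (Case~3 of Lemma~\ref{starting} and the final $[0,1,1,0]$ lemma) but not for this statement. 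Your computation of the resulting LHS signature $\Gamma=[1+b^3,\,b^2,\,b,\,1+2b^3]$ is correct, the positivity of the weight-$1$ and weight-$2$ entries for $b>0$ is correct, and the nondegeneracy check via the first catalecticant $(1+b^3)b - b^4 = b \neq 0$ is valid, so Lemma~\ref{starting} applies and \numP-hardness transfers back along the gadget reduction. Both arguments are sound: the paper's buys brevity by reusing Lemma~\ref{b=0}, while yours is more self-contained (it does not route through Lemma~\ref{b=0} or its gadget $G_2$ at all) at the cost of one more explicit gadget computation.
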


\begin{proof}
By flipping 0 and 1 in the input, we equivalently consider the signature $[1,b,0,1]$. This case is dealt with in Lemma~\ref{b=0}
except when $b=0$ in which case it is in FP.
\end{proof}


\begin{lemma}
If $a=0$ and $c=0$, then $[1,0,b,0]$ is \numP-hard unless $b=0$ or $b=1$, in which case the problem is in $\operatorname{FP}$.
\end{lemma}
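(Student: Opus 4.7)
The plan is to dispatch the two tractable cases directly from Theorem~\ref{main-thm} and, for all other $b$, to construct a binary signature on the LHS that falls under Theorem~\ref{2-3}. When $b=0$, the signature is $[1,0,0,0] = [1,0]^{\otimes 3}$, which is degenerate, so we are in case~1 of Theorem~\ref{main-thm}; when $b=1$, the signature is $[1,0,1,0]$, which has $x_1=x_3=0$ and $x_0=x_2$, so we are in case~3. Both are in FP.

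Assume now $b \neq 0, 1$. The key observation is that the signature matrix of the binary straddled gadget $G_1$ with $f=[1,0,b,0]$ on the square and $(=_3)$ on the circle is
\[
\begin{pmatrix} 1 & b \\ 0 & 0 \end{pmatrix},
\]
which is \emph{already} in the form $\begin{pmatrix} 1 & x \\ y & xy \end{pmatrix}$ with $x=b$ and $y=0$. It is nonnegative and degenerate, and $(=_3) = [1,0,0,1]$ is not a multiple of $[0,1]^{\otimes 3}$. Since $G_1$ is itself a concrete gadget realizing this straddled signature (no Vandermonde interpolation is needed, so the hypotheses $a \neq 0$ and $\Delta > 0$ of Lemma~\ref{degenerate} are not required), Lemma~\ref{lm:dege-unary-split} lets us add the unary signature $[1,b]$ on the RHS of $\holant{[1,0,b,0]}{(=_3)}$.

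Next I would connect this unary $[1,b]$ on the RHS to one variable of $[1,0,b,0]$ on the LHS. A short computation using only the nonzero entries of $[1,0,b,0]$ at Hamming weights $0$ and $2$ yields the binary symmetric signature $[1, b^2, b]$ on the LHS, giving the reduction
\[
\holant{[1, b^2, b]}{(=_3)} \leq_T \holant{[1,0,b,0]}{(=_3)}.
\]
To apply Theorem~\ref{2-3} I would renormalize $[1,b^2,b]$ (legal since $b > 0$) to $[1/b^2, 1, 1/b]$, so that in the notation of Theorem~\ref{2-3} we have $X = (1/b^2)(1/b) = 1/b^3$. For $b > 0$ with $b \neq 1$, $X$ is positive and not equal to $1$, so none of the four tractable cases of Theorem~\ref{2-3} apply, and $\holant{[1,b^2,b]}{(=_3)}$ is \numP-hard. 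Hence $\holant{[1,0,b,0]}{(=_3)}$ is \numP-hard as well.

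The only real insight is the opening observation that the $G_1$ matrix is already degenerate and already in the canonical form required by Lemma~\ref{lm:dege-unary-split}; after that, everything is a short routine calculation and an appeal to the 2-3 dichotomy. No separate subcase analysis like in Lemma~\ref{starting} is needed, because the reduced binary signature $[1,b^2,b]$ cleanly captures all remaining values of $b$.
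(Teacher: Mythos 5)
Your proposal is correct and follows essentially the same route as the paper: build the $G_1$ gadget to get the already-degenerate straddled matrix $\left(\begin{smallmatrix}1 & b\\ 0 & 0\end{smallmatrix}\right)$, split it via Lemma~\ref{lm:dege-unary-split} to obtain $[1,b]$ on the RHS, cap one leg of $[1,0,b,0]$ to produce $[1,b^2,b]$, and invoke Theorem~\ref{2-3}. You are slightly more explicit than the paper in observing that no Vandermonde interpolation is needed and in verifying the $X=1/b^3$ condition after renormalization, but the structure of the argument is identical.
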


\begin{proof}
By gadget $G_1$ where we place $[1,0,b,0]$ at the squares  and
 $\left(=_3\right)$ at the circles, we have the binary degenerate straddled signature $\left( \begin{array}{cc} 1 &b \\ 0 &0 \end{array}\right)$. We can thus split and get unary signature $[1,b]$ on RHS. Connect $[1,b]$ on RHS and $[1,0,b,0]$ on LHS, we get the binary signature $[1,b^2,b]$ on LHS. By Theorem \ref{2-3}, $[1,b^2,b]$ is \numP-hard unless $b=0$ or $b=1$. When $b=0$, the original signature $[1,0,b,0]$ becomes $[1,0,0,0]$ which is degenerate and thus the problem is tractable. When $b=1$, the signature $[1,0,b,0]$ becomes $[1,0,1,0]$ which is an affine signature and thus the problem is also tractable (\cite{cai2017complexity} p.70).
\end{proof}

\begin{lemma}
If $a=0$, $c \neq 0$, and $c \neq 1$, then $[1,0,b,c]$ is \numP-hard unless $b=0$, in which case the problem is in $\operatorname{FP}$.
\end{lemma}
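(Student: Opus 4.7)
The plan parallels Lemma~\ref{b=0}: reduce to Lemma~\ref{starting} by applying the ternary gadget $G_2$. First, I dispose of the sub-case $b = 0$: then $[1, 0, 0, c]$ has $x_1 = x_2 = 0$, so by Case~2 of Theorem~\ref{main-thm} the problem is in $\operatorname{FP}$.

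Now assume $b > 0$ (so $b, c > 0$ and $c \neq 1$). I will apply the gadget $G_2$ with squares assigned $[1, 0, b, c]$ and circles assigned $(=_3)$, and compute the resulting ternary symmetric signature $g$ on the LHS. A routine computation, analogous to the one in Lemma~\ref{b=0}, will give
\[
g = [1 + b^3 c,\ b^2 c^2,\ b(b^3 + c^3),\ c(3 b^3 + c^3)].
\]
All four entries are positive, so I can normalize by $g_0 = 1 + b^3 c$ to obtain $\tilde g = [1, A, B, C]$ with $A, B > 0$. Since both the weight-$1$ and weight-$2$ entries of $\tilde g$ are nonzero, Lemma~\ref{starting} applies to $\tilde g$.

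The one remaining check is that $\tilde g$ is non-degenerate. The condition $g_0 g_2 = g_1^2$ will reduce (after a short calculation) to $b(b^3 + c^3 + b^6 c) = 0$, which is impossible for $b, c > 0$. Hence $\tilde g$ is non-degenerate, Lemma~\ref{starting} yields \numP-hardness of $\tilde g$, and the reduction via $G_2$ lifts this to \numP-hardness of $[1, 0, b, c]$.

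The main (routine) technical step is the arithmetic of the $G_2$ evaluation on $[1, 0, b, c]$; beyond that the argument is a short non-degeneracy verification followed by invocation of the established Lemma~\ref{starting}. Note that it is crucial to use the $g_0 g_2 = g_1^2$ relation here, since the alternative relation $g_1 g_3 = g_2^2$ only forces $b = c$ rather than an outright contradiction.
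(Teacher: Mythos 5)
Your proof is correct, but it takes a genuinely different route from the paper. The paper's proof is a one-liner: flip the roles of $0$ and $1$ in the input, which turns $[1,0,b,c]$ into its reversal $[c,b,0,1]$, normalize by the nonzero factor $c$ to get $[1, b/c, 0, 1/c]$, and then observe this has the form $[1,a',0,c']$ with $a' = b/c \ne 0$ and middle entry $0$, so Lemma~\ref{b=0} applies directly (with $b=0$ giving the tractable $[c,0,0,1]$). Your approach instead re-runs the $G_2$ gadget computation from scratch on $[1,0,b,c]$; I verified your formula $g = [1+b^3c,\ b^2c^2,\ b(b^3+c^3),\ c(3b^3+c^3)]$ against the gadget structure implied by the two $G_2$ computations in the paper, and your non-degeneracy calculation $g_0 g_2 - g_1^2 = b(b^3 + c^3 + b^6 c) > 0$ is also correct, so your reduction to Lemma~\ref{starting} goes through. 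The trade-off: the paper's flip-and-normalize argument is shorter and avoids redoing gadget arithmetic, reusing Lemma~\ref{b=0} as a black box; your direct argument is self-contained and, as a bonus, never actually needs the hypothesis $c \ne 1$, so it would subsume the adjacent lemma for the $c=1$ case as well. Your observation that one must check the $g_0 g_2 = g_1^2$ relation rather than $g_1 g_3 = g_2^2$ (which only forces $b=c$) is a correct and useful warning about where a carelessly chosen degeneracy condition would fail.
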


\begin{proof}
By flipping 0 and 1 in the input, we equivalently consider the signature $[c,b,0,1]$. Thus  Lemma \ref{b=0} applies
after normalizing, except when  $b=0$  in which case  $[c,0, 0, 1]$ is in $\operatorname{FP}$.
\end{proof}


The discussion for the case $\neg (x_0 = x_3 = 0)$ is now complete.

\section{Case $x_0=x_3=0$}
If both $x_1=x_2=0$, the signature is identically zero and the problem is trivially in FP.

Suppose exactly one of $x_1$ and $x_2$ is 0. In this case, by normalizing and possibly flipping 0 and 1 in the input, it suffices to consider the ternary signature $[0,1,0,0]$.

\begin{lemma}
The problem $\holant{[0,1,0,0]}{\left(=_3\right)}$ is \numP-hard.
\end{lemma}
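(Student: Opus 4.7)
The plan is to reduce from counting perfect matchings on 3-regular bipartite graphs, a classical \numP-hard problem. The starting observation is that $\holant{[0,1,0,0]}{[0,1,0,0]}$ on a 3-regular bipartite graph computes exactly \#PM on that graph: each vertex on either side, carrying signature $[0,1,0,0]$, requires exactly one of its incident edges to be assigned value $1$, so the set of edges assigned $1$ forms a perfect matching.

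To effect the reduction, I will construct a gadget realizing the signature $[0,1,0,0]$ on the RHS from the available primitives. The gadget uses two squares on the LHS (each labeled $[0,1,0,0]$) and three circles on the RHS (each labeled $(=_3)$), with a single edge between every square-circle pair and one additional dangling edge from each circle. Every internal vertex has degree $3$, and the gadget exposes $0$ LHS and $3$ RHS dangling edges, consistent with the arity congruence $n \equiv m \pmod{3}$.

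To compute the gadget's signature, the equality constraint at each circle $c_j$ forces all three of its incident edges---including its dangling one---to share a common value $\alpha_j$, so the dangling values satisfy $y_j = \alpha_j$. Both squares then see the identical ternary input $(\alpha_1, \alpha_2, \alpha_3) = (y_1, y_2, y_3)$, and since $f = [0,1,0,0]$ is $\{0,1\}$-valued we have $f(y_1,y_2,y_3)^2 = f(y_1,y_2,y_3)$. Thus the gadget signature on its three dangling RHS edges is exactly $[0,1,0,0]$. Substituting this gadget for every RHS $[0,1,0,0]$-vertex in an instance of $\holant{[0,1,0,0]}{[0,1,0,0]}$ yields an equivalent 3-regular bipartite instance of $\holant{[0,1,0,0]}{(=_3)}$ with the same Holant value, completing the reduction.

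The main conceptual hurdle is identifying the right source problem; once one recognizes that $\holant{[0,1,0,0]}{[0,1,0,0]}$ computes \#PM on 3-regular bipartite graphs, the gadget is essentially forced by the $\{0,1\}$-valued identity $f^2 = f$, which allows a second parallel copy of $f$ to be absorbed without altering the signature.
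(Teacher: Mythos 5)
Your proof is correct but takes a genuinely different route from the paper. The paper observes that $\holant{[0,1,0,0]}{(=_3)}$ on incidence graphs of 3-uniform 3-regular hypergraphs directly counts solutions to Restricted Exact Cover by 3-Sets, and traces the parsimonious reduction chain $\#\operatorname{SAT} \le \#\operatorname{3DM} \le \#\operatorname{X3C} \le \#\operatorname{RX3C}$, citing Gonzalez for the last step. You instead start from \#PM on 3-regular bipartite graphs, i.e., $\holant{[0,1,0,0]}{[0,1,0,0]}$, and build a gadget realizing $[0,1,0,0]$ on the RHS. Your gadget is correct: the $(=_3)$ at each circle $c_j$ propagates the dangling value $y_j$ to both incident internal edges, so each square sees the same triple $(y_1,y_2,y_3)$, and since $f = [0,1,0,0]$ is $\{0,1\}$-valued the gadget signature $f^2 = f$ is exactly $[0,1,0,0]$; the arity-mod-3 check also passes. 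Replacing each RHS vertex of a \#PM instance by this gadget preserves 3-regularity, bipartiteness, and the Holant value. The one thing your argument leaves implicit is a citation for the \#P-hardness of counting perfect matchings on 3-regular bipartite graphs (a classical result, e.g.\ Dagum--Luby), which plays the role the paper assigns to the Gonzalez reduction. Your route is more in the spirit of Holant-style gadget reductions and exploits the idempotence of a 0-1 signature in a pleasant way, while the paper's is more immediate once the problem is recognized as a hypergraph exact-cover problem.
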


\begin{proof}
We begin with a reduction from Restricted Exact Cover by 3-Set (RX3C) \cite{gonzalez1985clustering}.
This is a restricted version of  the Set Exact Cover problem
where every set has exactly 3 elements and every element is in exactly 3 sets. Given any instance for RX3C, construct the 3-regular bipartite graph whose vertices on LHS are elements of the set $X$ and vertices on RHS are 3-element subsets of $X$. Connect an edge between one vertex $v$ on LHS to one vertex $C$ on RHS if and only if $v \in C$. Then every nonzero term in the Holant sum
for  $\holant{[0,1,0,0]}{\left(=_3\right)}$ exactly
corresponds to  one solution of RX3C. We observe that the reduction given in \cite{gonzalez1985clustering} (from the Exact Cover by 3 Set
problem (X3C))  is parsimonious. And it is well known that
SAT reduces to X3C parsimoniously via 
3-Dimensional Matching (3DM).
%
Thus we have the reduction chain:
$$
\#\operatorname{SAT} \leq_{T}
  \#\operatorname{3DM} \leq_{T} \#\operatorname{X3C} \leq_{T}  \#\operatorname{RX3C} \leq_{T} \holant{[0,1,0,0]}{\left(=_3\right)}
$$

We conclude that the problem $[0,1,0,0]$ is \numP-hard.
\end{proof}

It remains to consider the case when $x_1 \cdot x_2 \neq 0$. We normalize the signature to be $[0,1,b,0]$.
\begin{lemma}
If $b \neq 0$, then  the problem $\holant{[0,1,b,0]}{\left(=_3\right)}$
 is \numP-hard.
\end{lemma}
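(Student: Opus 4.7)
The plan is to reduce to Lemma~\ref{starting} via the ternary gadget $G_4$ of Figure~\ref{gadgets}(d), exactly as in Case~3 of Lemma~\ref{starting}. Recall $G_4$ is built from three $f$-vertices each carrying one external dangling edge on the LHS, joined by two $(=_3)$-vertices on the RHS in the fully symmetric configuration
\[
G_4(X_1, X_2, X_3) \;=\; \sum_{y_1, y_2 \in \{0,1\}}\, \prod_{i=1}^{3} f(X_i, y_1, y_2).
\]
Substituting $f = [0, 1, b, 0]$ and summing the four $(y_1, y_2)$ contributions at each Hamming weight of $(X_1, X_2, X_3)$ will yield the symmetric ternary signature $[\,2 + b^3,\; 2b,\; 2b^2,\; 1 + 2b^3\,]$ on the LHS.

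Because $b > 0$, the leading entry $2 + b^3$ is strictly positive, so I would divide by it and write the normalized signature as $[1, a', b'', c']$ with $a' = 2b/(2+b^3) > 0$, $b'' = 2b^2/(2+b^3) > 0$, and $c' = (1 + 2b^3)/(2+b^3)$. This establishes
\[
\holant{[1, a', b'', c']}{(=_3)} \;\leq_T\; \holant{[0, 1, b, 0]}{(=_3)},
\]
and since $a' b'' \neq 0$, the hypothesis of Lemma~\ref{starting} is met.

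It then remains only to rule out the lone exceptional case in Lemma~\ref{starting}, namely degeneracy of $[1, a', b'', c']$. Degeneracy would force $[1, a', b'', c'] = [1, r]^{\otimes 3}$ for some $r$, and in particular $b'' = (a')^2$. Substituting the expressions above reduces this equation to $b^3 = 0$, which contradicts $b \neq 0$. Hence Lemma~\ref{starting} concludes that $[1, a', b'', c']$ is \numP-hard, and by the reduction above so is $\holant{[0, 1, b, 0]}{(=_3)}$. The only real computational content is evaluating the four $(y_1, y_2)$-sums defining $G_4(f)$, which is routine bookkeeping; no step presents a genuine obstacle.
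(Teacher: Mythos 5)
Your proof is correct, and it takes a cleaner route than the paper's. The paper first applies gadget $G_2$ (squares $[0,1,b,0]$, circles $(=_3)$) to obtain $[3b^2,\,1+2b^3,\,2b+b^4,\,3b^2]$, whose degeneracy conditions under Lemma~\ref{starting} reduce to $(b^3-1)^2=0$; this leaves the exceptional case $b=1$, which the paper then handles with a second gadget $G_4$ producing $[3,2,2,3]$. You instead apply $G_4$ from the outset, getting $[\,2+b^3,\,2b,\,2b^2,\,1+2b^3\,]$; your computation of this signature is correct (and consistent with the paper's own uses of $G_4$, e.g. $[3,2,2,3]$ at $b=1$ and $[2+2a^3,2a+2a^2,2a+2a^2,2+2a^3]$ for $[1,a,1,a]$). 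Since $2+b^3>0$ one may normalize, $a'b''>0$ holds, and the single degeneracy check $b''=(a')^2$ reduces to $2b^2(2+b^3)=4b^2$, i.e.\ $b^5=0$, which is impossible for $b\neq 0$. Your one-gadget argument therefore handles all $b\neq 0$ uniformly and avoids the paper's case split at $b=1$; the only tradeoff is that it asks slightly more of $G_4$ than the paper does in its ``catch the exceptional case'' role, but there is no gap.
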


\begin{proof}
By gadget $G_2$ where we place the squares to be $[0,1,b,0]$ and the circles to be $\left(=_3\right)$, we have the ternary signature $[3b^2, 1+2b^3, 2b+b^4, 3b^2]$ on LHS. Lemma \ref{starting} applies (after normalizing) and the problem $[3b^2, 1+2b^3, 2b+b^4, 3b^2]$ is \numP-hard unless it is degenerate. The condition for $[3b^2, 1+2b^3, 2b+b^4, 3b^2]$ being degenerate is
$$
\begin{cases}
\frac{3b^2}{1+2b^3} = \frac{1+2b^3}{2b+b^4} \\
\frac{1+2b^3}{2b+b^4} = \frac{2b+b^4}{3b^2}
\end{cases}
$$
The second equation simplifies to $(b^3-1)^2 = 0$. Thus when $b \neq 1$, we conclude that $[0,1,b,0]$ is \numP-hard. 

Proceed with the case $[0,1,1,0]$. By gadget $G_4$, where we place the squares to be $[0,1,1,0]$ and the circles to be $[1,0,0,1]$, we get the (reversal invariant) ternary signature $[3,2,2,3]$ on LHS. By Lemma \ref{starting} this problem is \numP-hard.
\end{proof}

We note that the problem $\holant{[0,1,1,0]}{\left(=_3\right)}$ 
when restricted to planar graphs is in FP.

\vspace{.1in}

\noindent
$\mathbf{Problem:}$ Pl-{\sc \#HyperGragh-Moderate-3-Cover}

\noindent
$\mathbf{Input:}$ A planar 3-uniform 3-regular hypergraph $G$.

\noindent
$\mathbf{Output:}$ The number of subsets of hyperedges
that cover every vertex with no vertex covered three times.

This is exactly the problem  $\text{Pl-Holant} ([0,1,1,0]\mid \left(=_3\right))$, the restriction of $\holant{[0,1,1,0]}{\left(=_3\right)}$ 
 to planar graphs. 
Its P-time tractability is  seen by the following holographic
reduction using the Hadamard matrix
$H = 
\left( \begin{array}{cc} 1 &1 \\ 1 &-1 \end{array}\right)
$ to counting weighted perfect matchings.
Under this transformation both signatures
$[0,1,1,0]$ and $(=_3)$ are transformed to matchgate
signatures.
\[H^{\otimes 3}  (=_3)
=
H^{\otimes 3} 
\left[
\left( \begin{array}{cc} 1 \\ 0 \end{array}\right)^{\otimes 3}
+
\left( \begin{array}{cc} 0 \\ 1 \end{array}\right)^{\otimes 3}
\right] 
=
\left( \begin{array}{cc} 1 \\ 1 \end{array}\right)^{\otimes 3}
+
\left( \begin{array}{cc} 1 \\ -1 \end{array}\right)^{\otimes 3}
=
[2,0,2,0],
\]
and 
\[
[0,1,1,0] (H^{-1})^{\otimes 3}
=
\frac{1}{8}
\left[
(1,~~1)^{\otimes 3} - (1,~~0)^{\otimes 3} - (0,~~1)^{\otimes 3}
\right]
H^{\otimes 3}
=
\frac{1}{4}
[3, 0, -1, 0].
\]
Since both these are matchgate signatures,
the problem is reduced to counting weighted perfect matchings on planar graphs.
Thus the planar problem 
$\text{Pl-}\holant{[0,1,1,0]}{\left(=_3\right)}$ 
can be computed in polynomial time using Kasteleyn's
algorithm (see \cite{valiant2008holographic} and \cite{cai2017complexity}). For readers unfamiliar with
holographic algorithms, we will describe this algorithm
for Pl-{\sc \#HyperGragh-Moderate-3-Cover} in more detail
in an Appendix.

\section*{Acknowledgement}
We sincerely thank Shuai Shao for many helpful discussions.

\section*{Appendix}
The holographic algorithm
for Pl-{\sc \#HyperGragh-Moderate-3-Cover} is achieved by a
many-to-many reduction from this problem to
 the problem of counting (weighted) planar perfect matchings
 (Pl-\#{\sc PM}).
 Given planar weighted graph $G = (V, E, w)$
 where $w : E \rightarrow \mathbb{R}$ is a weight function
 for the edge set, any perfect matching $M \subseteq E$ 
 has weight $w(M) = \prod_{e \in M} w(e)$, and the partition function  of
 the problem  Pl-\#{\sc PM} is
\[\#{\rm PM} (G) = \sum_{{\rm perfect~matchings}~M} w(M).\]
Notice that when all weights $w(e)=1$, i.e., in the unweighted
case, $\#{\rm PM} (G)$ 
simply counts the number of perfect matchings in $G$.
Kasteleyn's
algorithm can compute this  partition function $\#{\rm PM} (G)$
in polynomial time on planar graphs.  It is important to note that
Kasteleyn's
algorithm can handle the case when $w(e)$ are both positive and
negative real 
(even complex) numbers, which is important in this case for us.

The values of $\holant{[0,1,1,0]}{\left(=_3\right)}$
and
$\holant{\frac{1}{4}[3,0,-1,0]}{[2,0,2,0]}$
on the same instance graph $G$
are exactly the same by Valiant's Holant Theorem~\cite{{valiant2008holographic}}, as
\[
[0,1,1,0] (H^{-1})^{\otimes 3}
=
\frac{1}{4}
[3, 0, -1, 0]
~~~~\mbox{and}~~~~
H^{\otimes 3}  (=_3)
=
[2,0,2,0].
\]
Notice that this is a many-to-many transformation
among solutions of one problem  $\holant{[0,1,1,0]}{\left(=_3\right)}$
(subsets of hyperedges
that cover every vertex with no vertex covered three times)
and another $\holant{\frac{1}{4}[3,0,-1,0]}{[2,0,2,0]}$.

Next we will use the following gadgets, called matchgates,
to ``implement'' the constraint functions
$\frac{1}{4}[3,0,-1,0]$ and $[2,0,2,0]$
by perfect matchings.  Consider the following two
matchgates.

\begin{figure}[h!]
  \centering
  \begin{subfigure}[b]{0.3\linewidth}
  \begin{center}
    \includegraphics[width=0.6\linewidth]{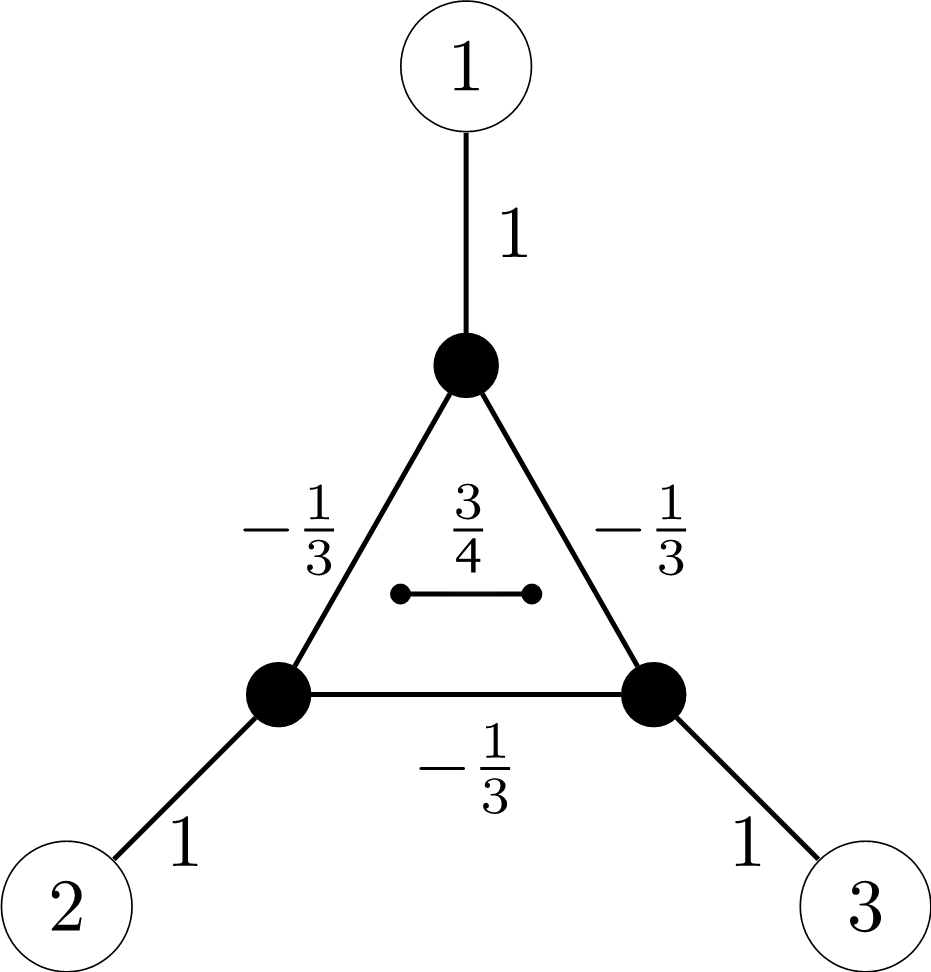}
    \caption{Matchgate for $\frac{1}{4}[3,0,-1,0]$}
    \end{center}
  \end{subfigure}
  \begin{subfigure}[b]{0.3\linewidth}
    \begin{center}
    \includegraphics[width=0.6\linewidth]{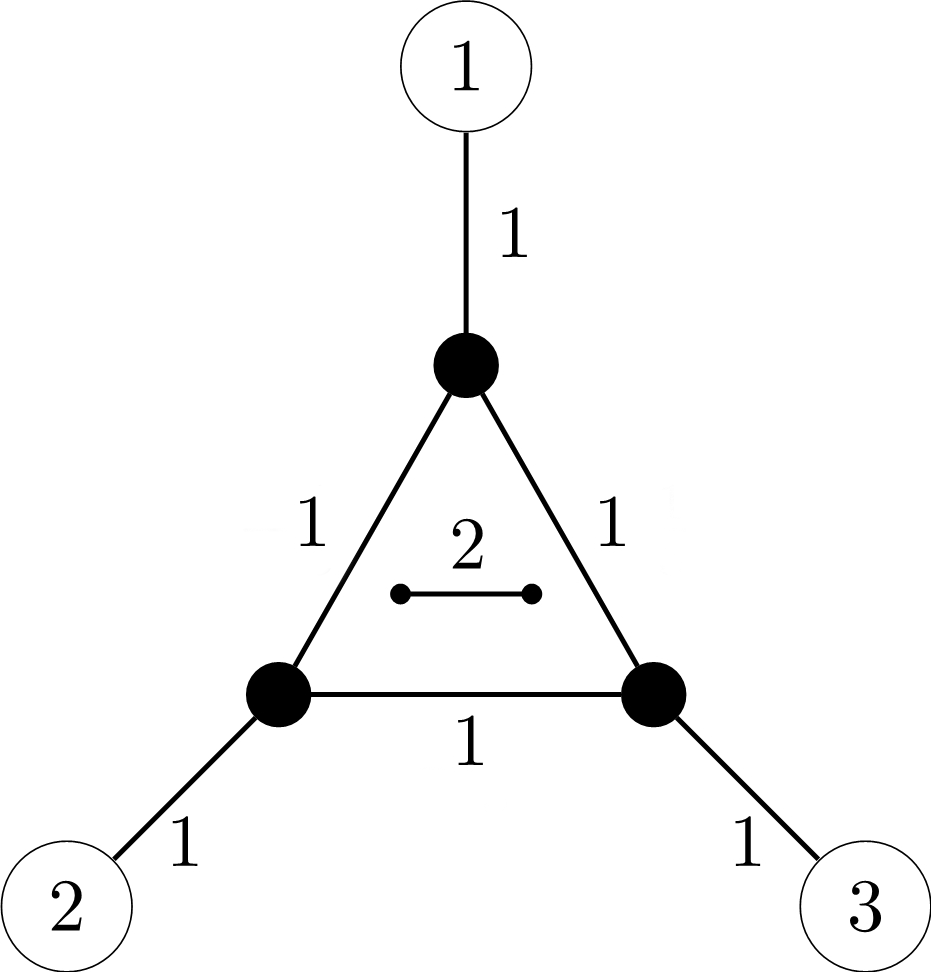}
    \caption{Matchgate for $[2,0,2,0]$}
      \end{center}
  \end{subfigure}
  \caption{Matchgates}
  \label{machgates-figures}
\end{figure}

For the matchgate in Figure~\ref{machgates-figures} (a),
 we consider its perfect matchings, when any subset
 $\emptyset \subseteq S \subseteq \{1,2,3\}$
 of the external nodes labeled $1,2,3$ are removed.
 For $S = \emptyset$ there is a unique perfect matching
 $M$ with weight $w(M) = 3/4$.
 If $|S| = 2$, we get $w(S) = -1/3$.
 If $|S|$ is odd, then $w(S) = 0$.
 In this sense the matchgate has the symmetric
 signature $\frac{1}{4}[3,0,-1,0]$.
 
A similar calculation shows that the  the matchgate
in Figure~\ref{machgates-figures} (b), has the symmetric
 signature $[2,0,2,0]$.
 
 Now we take a planar input bipartite graph $G$
 for  $\text{Pl-Holant} ([0,1,1,0]\mid \left(=_3\right))$,
 replace each vertex of degree three on the LHS
 (which has the label $[0,1,1,0]$)
 by the matchgate in Figure~\ref{machgates-figures} (a),
 replace each vertex of degree three on the RHS
 (which has the label $[1,0,0,1]$)
 by the matchgate in Figure~\ref{machgates-figures} (b),
 and for any edge in $G$
 add one edge with weight 1 between the corresponding
 external vertices from their respective matchgates.
 This creates a planar graph $G'$, of size linearly bounded by that  of
 $G$.
 
 A moment reflection should convince the reader that
 the value $\#{\rm PM} (G')$ is exactly the same
 as the Holant value 
 $\operatorname{Holant}\left(G\right)$.

\bibliographystyle{plain}
\bibliography{References}
 

\end{document}